\definecolor{MyBlue}{rgb}{0.12, 0.12, 0.76}
\let\oldReturn\Return
\renewcommand{\Return}{\State\oldReturn}
\newcommand{\thickhline}{%
    \noalign {\ifnum 0=`}\fi \hrule height 1.4pt
    \futurelet \reserved@a \@xhline
}
\newcolumntype{"}{@{\hskip\tabcolsep\vrule width 1.4pt\hskip\tabcolsep}}
\newtheorem{theorem}{Theorem}[section]
\newtheorem{lemma}{Lemma}[section]
\newtheorem{corollary}{Corollary}[theorem]
\newtheorem{definition}{Definition}[section]
\newtheorem{example}[theorem]{Example}
\newtheorem{proposition}	[theorem]	{Proposition} 
\newlist{exlist}{enumerate}{1}
\setlist[exlist]{label=(\alph*)}
\newcommand{\perf}{\textsc{Perf}}
\newcommand{\eps}{\epsilon}
\newcommand{\sm}{\setminus}
\begin{document}

\title{Resource Augmentation\thanks{Chapter~4 of the book {\em Beyond the
      Worst-Case Analysis of Algorithms}~\citep{bwca}.}}
\author{Tim Roughgarden\thanks{Department of Computer Science,
    Columbia University.  Supported in part by NSF award
    CCF-1813188 and ARO award W911NF1910294.  Email: \texttt{tim.roughgarden@gmail.com.}}}

\maketitle

\begin{abstract}
This chapter introduces {\em resource augmentation}, in which
the performance of an algorithm 
is compared
to the best-possible solution that is handicapped by less resources.
We consider three case studies: online paging, with cache size as the
resource; 
selfish routing, with capacity as the resource; and scheduling, with
processor speed as the resource.  Resource augmentation bounds also
imply ``loosely competitive'' bounds, which show that an algorithm's
performance is near-optimal for most resource levels.
\end{abstract}

\section{Online Paging Revisited}\label{s:paging}

This section illustrates the idea of resource augmentation with a
familiar example, the competitive analysis of online paging
algorithms.  Section~\ref{s:disc} discusses the pros and cons of
resource augmentation more generally, Sections~\ref{s:sr}
and~\ref{s:sched} describe additional case studies in routing and
scheduling, and Section~\ref{s:lc} shows how resource augmentation bounds
lead to ``loosely competitive'' guarantees.

\subsection{The Model}

Our first case study of resource augmentation concerns the online
paging problem introduced in Chapter~1.
Recall the ingredients of the problem:
\begin{itemize}

\item There is a slow memory with $N$ pages.

\item There is a fast memory (a {\em cache}) that can hold only
$k < N$ of the pages at a time.

\item Page requests arrive online over time, with one request per time
  step.  The decisions of an online algorithm at time~$t$ can depend
  only on the requests arriving at or before time~$t$.

\item If the page $p_t$ requested at time $t$ is already in the cache,
no action is necessary.

\item If $p_t$ is not in the cache, it must be brought in;
if the cache is full, one of its~$k$ pages must be evicted.
This is called a {\em page fault}.
\footnote{This model corresponds to ``demand paging,''
meaning algorithms that modify the cache only in response to a page
fault.  The results in this section continue to hold in the more
general model in which
an algorithm is allowed to make arbitrary changes to the
cache at each time step, whether or not there is a page fault,
with the cost incurred by the algorithm equal to the number of changes.}

\end{itemize}
We measure the performance $\perf(A,z)$ of an algorithm~$A$ on a page
request sequence~$z$ by the number of page faults incurred.

\subsection{FIF and LRU}

As a benchmark, what would we do if we had clairvoyance about all
future page requests?  An intuitive greedy algorithm minimizes the
number of page faults.
\begin{theorem}[\citet{B67}]\label{t:fif}
  The {\em Furthest-in-the-Future (FIF)} algorithm, which on a page fault
  evicts the page to be requested furthest in the future, always
  minimizes the number of page faults.
\end{theorem}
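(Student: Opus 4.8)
The plan is a standard exchange argument: transform an arbitrary clairvoyant algorithm, one eviction decision at a time, into FIF without ever increasing the page-fault count. Fix a finite request sequence~$z$, and among the fault-minimizing algorithms on~$z$ let \textsc{Opt} be one that agrees with FIF's eviction decisions on the longest possible prefix of its page faults (this prefix length is a nonnegative integer bounded by the number of faults, so a maximizer exists). The goal is to show this prefix comprises \emph{all} of \textsc{Opt}'s faults, since then \textsc{Opt} makes exactly FIF's eviction decisions on~$z$, so FIF incurs the optimal number of faults on~$z$; as $z$ is arbitrary, FIF is then optimal.

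Suppose for contradiction that \textsc{Opt} first deviates from FIF at the fault occurring at some time~$t$. Before time~$t$ the two algorithms have made identical eviction choices and therefore carry identical caches, but at time~$t$ FIF evicts the page~$p$ whose next request is furthest in the future while \textsc{Opt} evicts some $q \neq p$. I would define a new algorithm \textsc{Opt}$'$ that copies FIF through the fault at time~$t$ (in particular evicting~$p$ there) and thereafter shadows \textsc{Opt}, maintaining the invariant that \textsc{Opt}$'$'s cache differs from \textsc{Opt}'s in at most one slot, with \textsc{Opt}$'$ holding~$q$ wherever \textsc{Opt} holds~$p$. Request by request: on a request to a page common to both caches neither faults and the invariant persists; on a request to a page in neither cache both fault, \textsc{Opt} evicts some page~$y$, and \textsc{Opt}$'$ evicts the same~$y$ if $y \neq p$ (preserving the one-slot discrepancy) or evicts~$q$ instead if $y = p$ (making the caches identical). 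The only potentially awkward request is one to~$p$ itself, on which \textsc{Opt} does not fault but \textsc{Opt}$'$ would; but by the defining property of FIF the first request to~$q$ after time~$t$ is strictly earlier than the first request to~$p$ after time~$t$ (and if~$p$ is never requested again this awkward case never occurs at all), and on that earlier request to~$q$ the roles are reversed — \textsc{Opt} faults and \textsc{Opt}$'$ does not — so \textsc{Opt}$'$ can spend that slack on at most one extra (non-demand) eviction that re-equalizes the two caches, after which it simply copies \textsc{Opt} verbatim; this extra move is free in the general model of the footnote and can be deferred under strict demand paging. Summing up, every fault of \textsc{Opt}$'$ is charged to a distinct fault of \textsc{Opt}, so \textsc{Opt}$'$ is also fault-optimal, yet it agrees with FIF on one more fault than \textsc{Opt} does, contradicting the choice of \textsc{Opt}.

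I expect the bulk of the work to lie in this middle step: the careful case-by-case bookkeeping that the ``differ in at most one slot'' invariant survives every request and that each fault of \textsc{Opt}$'$ really does get its own fault of \textsc{Opt} to pay for it — in particular the sub-case where, on the request that brings~$q$ back, \textsc{Opt} evicts a page other than~$p$ and \textsc{Opt}$'$ must perform a compensating eviction. Once that accounting is pinned down, the theorem is immediate: each exchange produces a fault-optimal algorithm agreeing with FIF on strictly more of its faults, so, the sequence being finite, after finitely many exchanges we reach a fault-optimal algorithm that makes precisely FIF's eviction decisions on~$z$.
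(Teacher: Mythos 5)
The paper itself offers no proof of Theorem~\ref{t:fif}: it is quoted as B\'el\'ady's classical result with a citation only, so there is nothing in the text to compare your argument against, and it must be judged on its own. What you give is the standard exchange argument, and its skeleton is correct: pick a fault-optimal \textsc{Opt} agreeing with FIF on a maximal prefix of faults, and at the first disagreement (FIF evicts $p$, \textsc{Opt} evicts $q$) build a shadow algorithm \textsc{Opt}$'$ maintaining a one-slot discrepancy ($q$ in place of $p$) until the caches re-merge, charging every fault of \textsc{Opt}$'$ to a distinct fault of \textsc{Opt}. Your case analysis of the discrepancy (common page; page in neither cache with $y\neq p$ or $y=p$; the request to $q$) is the right bookkeeping, and your observation that the dangerous request to $p$ cannot precede the re-merge is the crux.

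Two points need tightening before this is a complete proof. First, the claim that the first request to $q$ after time~$t$ is \emph{strictly} earlier than the first request to $p$ is false when $q$ is never requested again; but in that case FIF, having chosen $p$ as furthest in the future, guarantees $p$ is never requested again either, so the awkward case is vacuous --- you note this only for $p$, and it is cleaner to state the property you actually use, namely that $p$'s next request (possibly $+\infty$) is no earlier than $q$'s. Second, the phrase ``can be deferred under strict demand paging'' is doing real, unproven work. In the pure demand-paging model, \textsc{Opt}$'$ cannot perform the compensating swap (evict $y$, install $p$) at the moment \textsc{Opt} faults on $q$ and evicts $y \neq p$; the honest route is to run the exchange in the general model of the paper's footnote (where the swap costs one change, paid for by the fault \textsc{Opt} incurred on $q$ that \textsc{Opt}$'$ avoided) and then invoke the standard ``lazy conversion'' lemma that any algorithm can be turned into a demand-paging one with no more faults. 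That same lemma is also needed at the very start, so that ``\textsc{Opt} agrees with FIF on a prefix of its faults'' is well defined: only for a demand-paging \textsc{Opt} do identical eviction decisions from identical initial caches force identical cache contents and hence an identical fault sequence on the prefix. With those two steps made explicit, the maximality contradiction and the finite-exchange conclusion go through as you describe.
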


The FIF algorithm is not an online algorithm, as its eviction
decisions depend on future page requests.  The {\em Least Recently
  Used (LRU)} policy, which on a page fault evicts the page whose most
recent request is furthest in the past, is an online surrogate for the
FIF algorithm that uses the past as an approximation for the future.
Empirically, the LRU algorithm performs well on most ``real-world''
page request sequences---not much worse than the unimplementable FIF
algorithm, and better than other online algorithms such as first-in
first-out (FIFO).  The usual explanation for the superiority of the
LRU algorithm is that the page request sequences that arise in
practice exhibit locality of reference, with recent requests likely to
be requested again soon, and that LRU automatically adapts to and
exploits this locality.

\subsection{Competitive Ratio}\label{ss:cr}

One popular way to assess the performance of an online algorithm is
through its competitive ratio:\footnote{See Chapter~24 for a
  deep dive on alternatives to worst-case analysis in the
  competitive analysis of online algorithms.}
\begin{definition}[\citet{ST85}]\label{d:cr}
The {\em competitive
ratio} of an online algorithm~$A$ is its worst-case performance
(over inputs~$z$) relative to an optimal {\em offline} algorithm $OPT$
that has advance knowledge of the entire input:
\[
\max_z \frac{\perf(A,z)}{\perf(OPT,z)}.
\]
\end{definition}
For the objective of minimizing the number of page faults,
the competitive ratio is always at least~1, and the closer to~1 the
better.\footnote{One usually ignores any extra additive terms in the
competitive ratio, which vanish as $\perf(OPT,z) \rightarrow \infty$.} 

Exercise~1.1 of Chapter~1 shows that, for every deterministic online
paging algorithm~$A$ and cache size~$k$, there are arbitrarily long
page request sequences~$z$ such that~$A$ faults at every time step
while the FIF algorithm faults at most once per~$k$ time steps.  This
example shows that every deterministic online paging algorithm has
a competitive ratio of at least~$k$.  For most natural online algorithms,
there is a matching upper bound of~$k$.  This state of affairs is
unsatisfying for several reasons:
\begin{enumerate}

\item The analysis gives an absurdly pessimistic performance
  prediction for LRU (and all other deterministic online algorithms),
  suggesting that a 100\% page fault rate is unavoidable.

\item The analysis suggests that online algorithms perform worse
  (relative to FIF) as the cache size grows, a sharp departure from
  empirical observations.

\item The analysis fails to differentiate between competing policies
  like LRU and FIFO, which both have a competitive ratio of~$k$.

\end{enumerate}
We next address the first two issues through a resource augmentation
analysis (but not the third, see Exercise~\ref{exer:fifo}).

\subsection{A Resource Augmentation Bound}

In a {\em resource augmentation} analysis, the idea is to compare the
performance of a protagonist algorithm (like LRU) to an all-knowing
optimal algorithm that is handicapped by ``less resources.''
Naturally, weakening the capabilities of the offline optimal algorithm
can only lead to better approximation guarantees.

Let~$\perf(A,k,z)$ denote the number of page faults incurred by the
algorithm~$A$ with cache size~$k$ on the page request sequence~$z$.
The main result of this section is:
\begin{theorem}[\citet{ST85}]\label{t:lru_ra}
For every page request sequence~$z$ and cache sizes $h \le k$,
\[
\perf(LRU,k,z) \le \frac{k}{k-h+1} \cdot \perf(FIF,h,z),
\]
plus an additive error term that goes to~0 with $\perf(FIF,h,z)$.
\end{theorem}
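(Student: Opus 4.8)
The plan is a block-based amortized argument. First I would cut the request sequence $z$ into consecutive blocks $B_1,B_2,\ldots,B_m$ on which LRU with cache size $k$ faults exactly $k$ times each (so $B_1$ runs up to LRU's $k$-th fault, $B_2$ up to its $(2k)$-th fault, and so on), leaving aside the remainder on which LRU faults fewer than $k$ times. This already gives $\perf(LRU,k,z)\le km+k$, so the whole theorem reduces to a per-block lower bound $\perf(FIF,h,\cdot)\ge k-h+1$ on each $B_j$: the blocks are disjoint, so summing gives $\perf(FIF,h,z)\ge (m-1)(k-h+1)$, and combining the two estimates yields $\perf(LRU,k,z)\le \tfrac{k}{k-h+1}\perf(FIF,h,z)+O(k)$, the $O(k)$ being the additive slack coming from the first and last partial blocks (which is exactly the error term that vanishes once one divides through by $\perf(FIF,h,z)$).

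The structural core is a claim about LRU. Write $D_j$ for the set of pages requested in $B_j$, and let $p_j$ be the page requested in the single step immediately before $B_j$ begins. I claim $|D_j\cup\{p_j\}|\ge k+1$ for $j\ge 2$. The one fact I need is that, since LRU's cache holds $k$ pages, whenever LRU evicts a page $q$ on a fault triggered by a request to $w$, the window from $q$'s most recent previous request up to that request to $w$ already contains $k+1$ distinct pages ($q$, the $k-1$ other resident pages, all used more recently, and $w$). Now: if LRU faults on fewer than $k$ distinct pages inside $B_j$, then some page faults twice, and the eviction that occurs between those two faults certifies $k+1$ distinct pages inside $B_j$; if LRU faults on $k$ distinct pages but $p_j$ is one of them, then $p_j$ (resident in LRU's cache at the start of $B_j$) was evicted and re-requested within $B_j$, and $p_j$ together with the $k$ pages that evicted it are $k+1$ distinct pages all requested in $B_j$; otherwise $p_j$ and the $k$ distinct faulted pages are themselves $k+1$ distinct pages. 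This proves the claim.

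Next I would convert this into a fault bound for FIF restricted to $B_j$, and here is the only subtle point: $p_j$ is necessarily resident in FIF's cache at the moment $B_j$ begins, since FIF has just served a request to it. Two cases. If $p_j\in D_j$, then $|D_j|\ge k+1$, and FIF's $h$-page cache holds at most $h$ of these pages, so at least $(k+1)-h$ of them are absent when $B_j$ starts and each forces a fault at its first request inside $B_j$. If $p_j\notin D_j$, then $|D_j|\ge k$, but now FIF is forced to spend one of its $h$ slots on $p_j\notin D_j$, so it covers at most $h-1$ pages of $D_j$, leaving at least $k-(h-1)=k-h+1$ of them to fault. Either way FIF incurs at least $k-h+1$ faults on $B_j$; the identical argument applies to any algorithm with a cache of size $h$, so the optimality of FIF (Theorem~\ref{t:fif}) is not actually needed.

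The step I expect to be the main obstacle is precisely this last conversion. The tempting move is to enlarge each block by its preceding request and count FIF's faults in the enlarged window; but consecutive enlarged windows share exactly one request, the resulting double-counting bleeds away a constant factor, and one ends up proving only the weaker bound with $\tfrac{k}{k-h}$ in place of $\tfrac{k}{k-h+1}$. The fix is the observation above: the ``borrowed'' page $p_j$ sits in FIF's cache for free, so its cost can be charged without ever leaving $B_j$, and that is what recovers the tight constant.
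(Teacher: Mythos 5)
Your proof is correct, but it runs the block argument in the opposite direction from the paper's. The paper cuts $z$ into maximal windows containing at most $k$ distinct pages, so the per-block FIF lower bound of $k-h+1$ is nearly definitional---maximality hands it $k+1$ distinct pages per block plus the next request---and the nontrivial work is showing that LRU faults at most $k$ times per block (once brought in, a page is not evicted until $k$ other distinct pages are requested). You instead cut $z$ at every $k$-th LRU fault, which makes the LRU upper bound trivial by construction and shifts all the work into your structural claim $|D_j \cup \{p_j\}| \ge k+1$, proved via the eviction-window property of LRU (between an evicted page's most recent request and the request triggering its eviction, $k+1$ distinct pages appear). That is the same LRU invariant the paper uses, just deployed for the lower bound rather than the upper bound, and your case analysis is sound (in particular the certifying window does stay inside $B_j$, or involves only $p_j$ and pages requested inside $B_j$). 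Your device of charging the borrowed page $p_j$ to a slot of FIF's cache is the exact counterpart of the paper's ``shifted block'' trick (after serving the first request $p$ of a block, only $h-1$ slots remain for the $k$ other distinct pages), and both recover the tight constant $k-h+1$ rather than $k-h$; your closing remark about why naive window enlargement loses the $+1$ is accurate, and your additive $O(k)$ slack matches the paper's treatment of the error term. The main trade-off: the paper's decomposition depends only on $z$, so the identical blocks serve for analyzing other policies (e.g., FIFO in Exercise~\ref{exer:fifo}), whereas your fault-count blocks are tailored to LRU and the richness-of-blocks lemma would have to be re-proved per policy.
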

For example, LRU suffers at most twice as many page faults
as the unimplementable FIF algorithm
when the latter has roughly half the cache size.

\begin{proof}
Consider an arbitrary page request sequence $z$ and cache sizes $h \le
k$.  We first prove an upper bound on the number of page faults incurred by
the LRU algorithm, and then a lower bound on the number of faults
incurred by the FIF algorithm.
A useful idea for accomplishing both goals
is to break $z$ into {\em blocks}
$\sigma_1,\sigma_2,\ldots,\sigma_b$.
Here $\sigma_1$ is the maximal prefix of $z$ in which
only $k$ distinct pages are requested; the block~$\sigma_2$ starts
immediately after and is maximal subject to only $k$ distinct pages
being requested within it;
and so on.

For the first step, note that LRU faults at most $k$ times within
a single block---at most once per page requested in the block.  The
reason is that once a page is brought into the cache, LRU won't
evict it until $k$ other distinct pages are requested, and this can't
happen until the following block.  Thus LRU incurs at most $bk$ page
faults, where $b$ is the number of blocks.
See Figure~\ref{f:lru}(a).

\begin{figure}
\begin{center}
\mbox{\subfigure[Blocks of a request sequence]{\epsfig{file=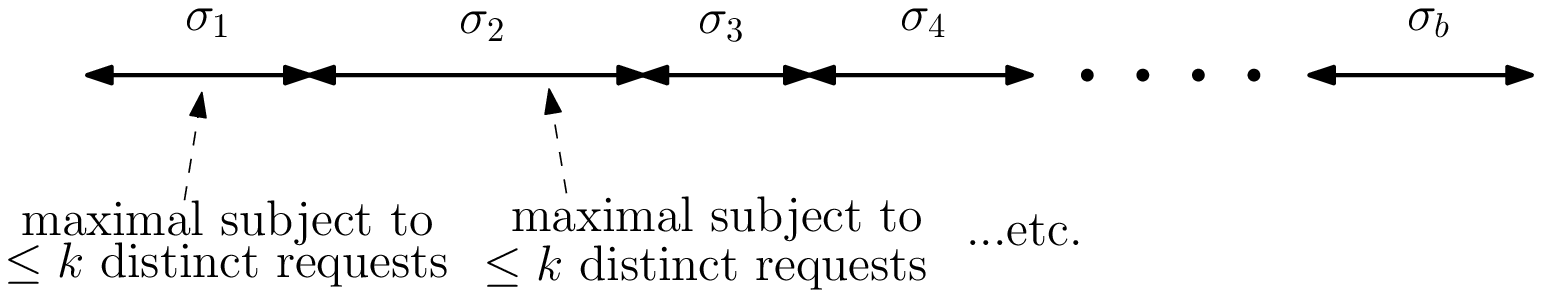,width=.475\textwidth}}\quad\quad
\subfigure[Lower bound for FIF (with $h=k$)]{\epsfig{file=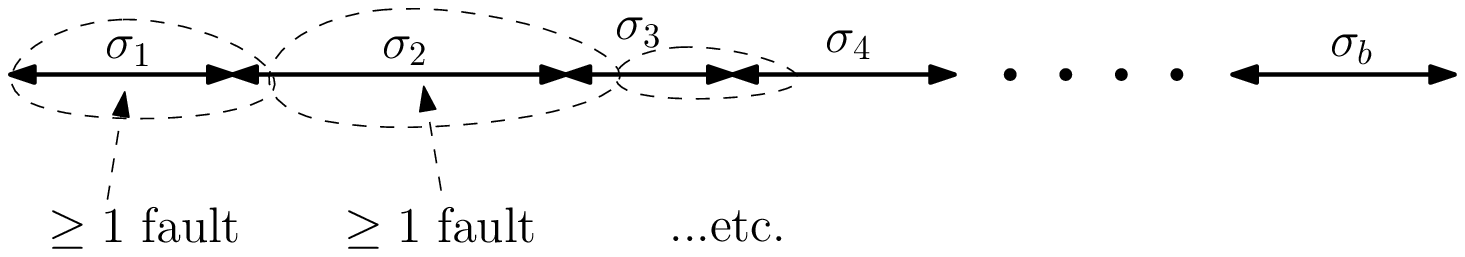,width=.475\textwidth}}}
\caption{Proof of Theorem~\ref{t:lru_ra}.  In~(a), the blocks of a page request sequence; the LRU algorithm incurs at most $k$ page faults in each.
In~(b), the FIF algorithm incurs at least $k-h+1$ page faults in each ``shifted block.''}\label{f:lru}
\end{center}
\end{figure}

For the second step, consider the FIF algorithm with a cache size $h
\le k$.  Consider the first block $\sigma_1$ plus the
first request of the second block $\sigma_2$.  Since $\sigma_1$ is
maximal, this 
represents requests for $k+1$ distinct pages.  At least $k-h+1$ of
these pages are initially absent from the size-$h$ cache, so no
algorithm can serve all $k+1$ pages without incurring at least $k-h+1$
page faults.
Similarly, suppose the first
request of~$\sigma_2$ is the page~$p$.  After
an algorithm serves the request for~$p$, the cache 
contains only $h-1$ pages other than~$p$.  By the maximality of~$\sigma_2$,
the ``shifted block'' comprising the rest of~$\sigma_2$ and the first
request of~$\sigma_3$ includes requests for~$k$ distinct pages other
than~$p$; these cannot all be
served without incurring another 
\[
\underbrace{k}_{\text{requests other than $p$}} - \underbrace{(h-1)}_{\text{pages in     cache other than $p$}}
\]
page faults.  And so on,
resulting in at least $(b-1)(k-h+1)$ page faults overall.
See Figure~\ref{f:lru}(b).

We conclude that 
\[
\perf(LRU,k,z) \le bk \le \frac{k}{k-h+1} \cdot \perf(FIF,h,z) + \frac{k}{(b-1)(k-h+1)}.
\]
The additive error term goes to~0 with~$b$, and the proof is complete.
\end{proof}

\section{Discussion}\label{s:disc}

Resource augmentation guarantees make sense for any problem in which
there is a natural notion of a ``resource,'' with algorithm performance
improving in the resource level; see Sections~\ref{s:sr}
and~\ref{s:sched} for two further examples.
In general, a resource augmentation guarantee implies that
the performance curves (i.e., performance as a function of
resource level) of an online algorithm and the offline optimal
algorithm are similar (Figure~\ref{f:ra}).

\begin{figure}
\begin{center}
\mbox{\subfigure[A good competitive ratio]{\epsfig{file=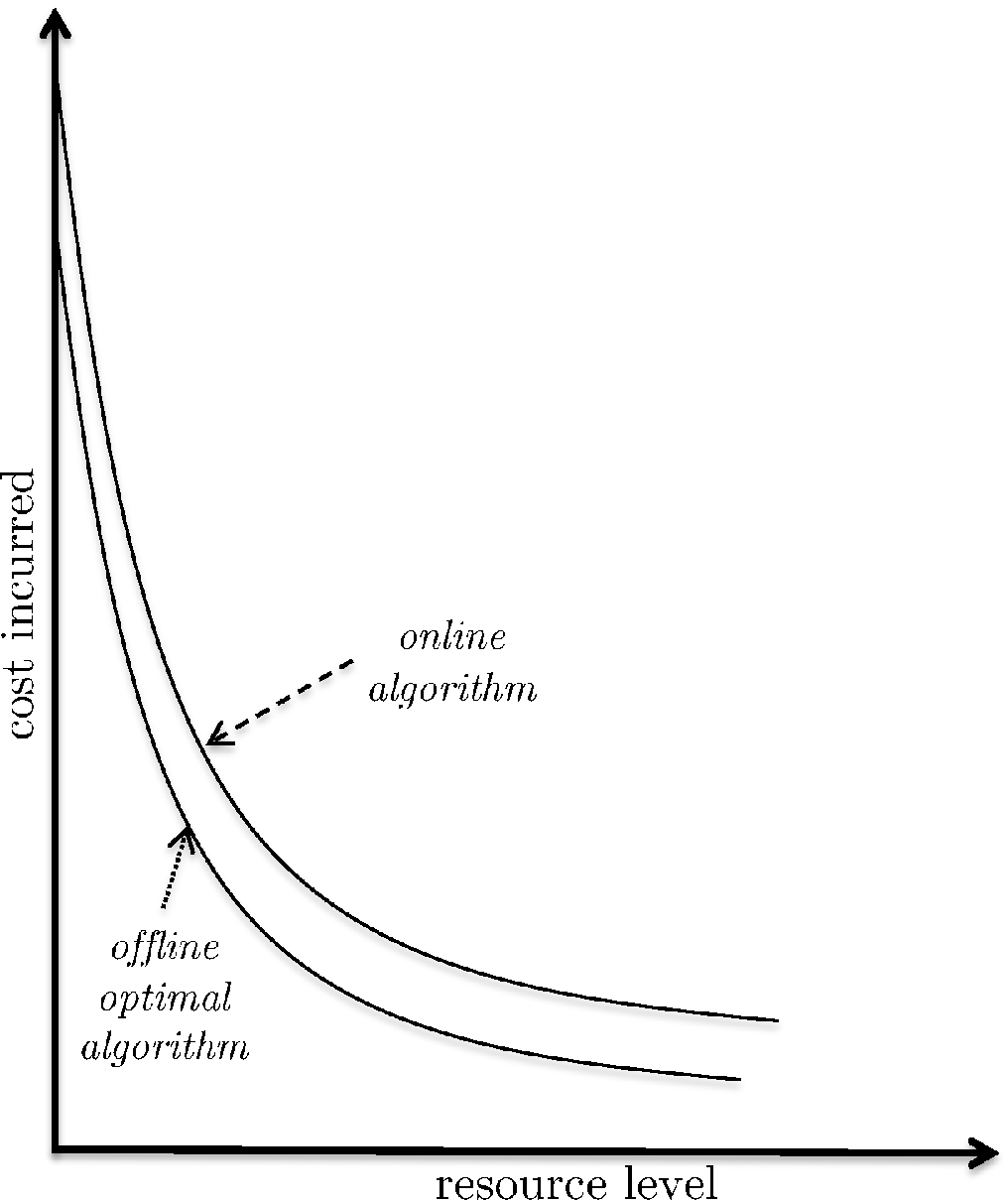,width=.45\textwidth}}\quad\quad
\subfigure[A resource augmentation guarantee]{\epsfig{file=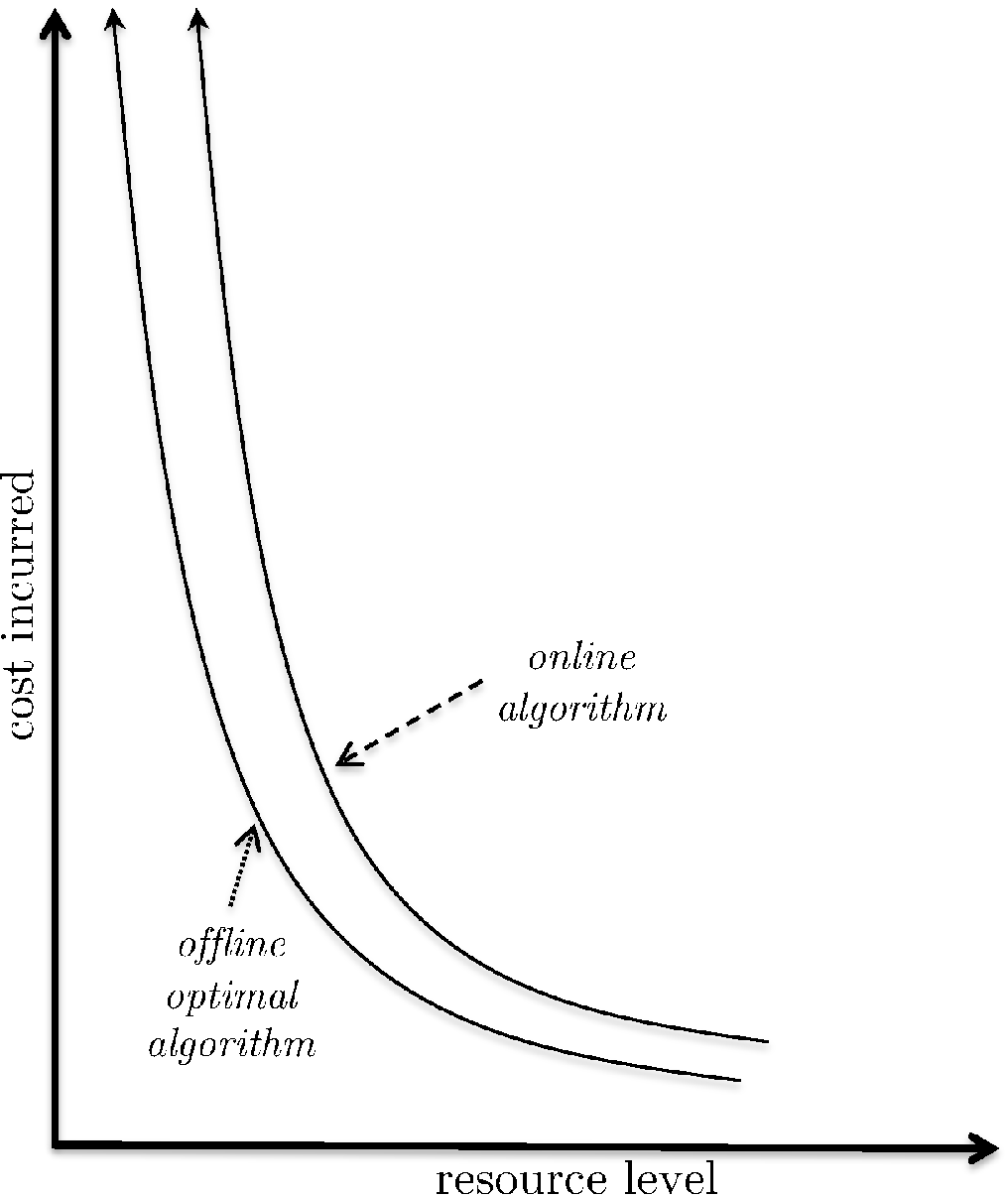,width=.45\textwidth}}}
\caption{Competitive ratio guarantees vs.\ resource augmentation
  guarantees.  All curves plot, for a fixed input, the cost incurred
  by an algorithm (e.g., number of page faults) as a function of the
  resource level (e.g., the cache size).  In~(a), a good upper bound
  on the competitive ratio requires that the curve for the online
  algorithm closely approximates that of the offline optimal algorithm
  pointwise over the $x$-axis.  In~(b), the vertical distance between
  the two curves (and the competitive ratio) grows
  large as the resource level approaches its minimum.
  A resource augmentation guarantee roughly translates to the relaxed
  requirement that every point of the online algorithm's performance
  curve has a nearby neighbor somewhere on the optimal offline
  algorithm's performance curve.}\label{f:ra}
\end{center}
\end{figure}

The resource augmentation guarantees in this chapter resemble
worst-case analysis, in that no model of data is proposed; the
difference is purely in the method of measuring algorithm performance
(relative to optimal performance).  As usual, this is both a feature
and a bug: the lack of a data model guarantees universal
applicability, but also robs the analyst of any opportunity to
articulate properties of ``real-world'' inputs that might lead to a
more accurate and fine-grained analysis.  There is nothing inherently
worst-case about resource augmentation guarantees, however, and the
concept can equally well be applied with one of the models of data
discussed in the other parts of this book.\footnote{For example,
  Chapter~27 combines robust distributional analysis with resource
  augmentation, in the context of prior-independent auctions.}

How should you interpret a resource augmentation guarantee like
Theorem~\ref{t:lru_ra}?  Should you be impressed?  Taken at face
value, Theorem~\ref{t:lru_ra} seems much more meaningful than the
competitive ratio of~$k$ without resource augmentation, even though it
doesn't provide particularly sharp performance predictions (as
to be expected, given the lack of a model of data).  But isn't it an
``apples vs.\ oranges'' comparison?  The optimal offline algorithm is
powerful in its knowledge of all future page requests, but it's
artificially hobbled by a small cache.

One interpretation of a resource augmentation guarantee is as a
two-step recipe for building a system in which an online algorithm
has good performance.
\begin{enumerate}

\item Estimate the resource level (e.g., cache size) such that the
  optimal offline algorithm has acceptable performance (e.g., page
  fault rate below a given target).\footnote{Remember: 
  competing with the optimal algorithm is only useful when its
  performance is good in some absolute sense!}
This task can be simpler than reasoning
simultaneously about the cache size and paging algorithm design
decisions.  

\item Scale up the resources to
realize the resource augmentation guarantee (e.g., doubling the cache size
needed by the FIF algorithm to achieve good performance).

\end{enumerate}

A second justification for resource
augmentation guarantees is that they usually lead directly to good ``apples
vs.\ apples'' comparisons for most resource levels (as suggested by
Figure~\ref{f:ra}(b)).
Section~\ref{s:lc} presents a detailed case study in the context of
online paging.

\section{Selfish Routing}\label{s:sr}

Our second case study of a resource augmentation guarantee concerns a
model of {\em selfish routing} in a congested network.

\subsection{The Model and a Motivating Example}

In selfish routing, we consider a directed flow network $G=(V,E)$,
with $r$ units of flow traveling from a source vertex $s$ to a sink
vertex $t$; $r$ is called the {\em traffic rate}.
Each edge~$e$ of the network has a flow-dependent cost
function $c_e(x)$.  For example, in the network in
Figure~\ref{fig:pigou}(a), the top edge has a constant cost function
$c(x) = 1$, while the cost to traffic on the bottom edge equals the
amount of flow~$x$ on the edge.

\begin{figure}
\begin{center}
\mbox{\subfigure[]{\epsfig{file=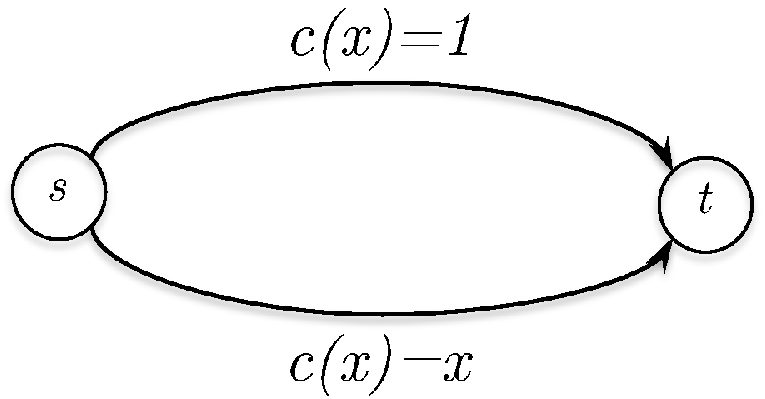,width=.45\textwidth}}\quad\quad
\subfigure[]{\epsfig{file=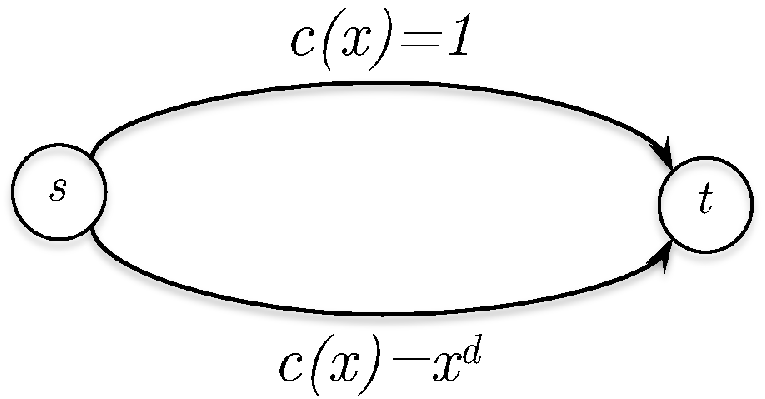,width=.45\textwidth}}}
\caption{Two selfish routing networks.
Each cost function $c(x)$ describes the cost incurred by users of an
edge, as a function of the amount of traffic routed on that
edge.}\label{fig:pigou}
\end{center}
\end{figure}

The key approximation concept in selfish routing networks is the {\em
  price of anarchy} which, as usual with approximation ratios, is
defined as the ratio between two things: a realizable protagonist and
a hypothetical benchmark.  

Our protagonist is an {\em equilibrium flow}, in which all traffic is
routed on shortest paths, where the length of an $s$-$t$ path~$P$ is
the (flow-dependent) quantity $\sum_{e \in P} c_e(f_e)$, where $f_e$
denotes the amount of flow using the edge~$e$.  In Figure~\ref{fig:pigou}(a),
with one unit of traffic, the only equilibrium flow sends all traffic
on the bottom edge.  If $\eps > 0$ units of traffic were routed on the
top path, that traffic would not be routed on a shortest path
(incurring cost~1 instead of~$1-\eps$), and hence would want to switch
paths.

Our benchmark is the optimal solution, meaning the fractional
$s$-$t$ flow that routes the $r$ units of traffic to minimize the
total cost $\sum_{e \in E} c_e(f_e)f_e$.
For example, in Figure~\ref{fig:pigou}(a), the optimal
flow splits traffic evenly between the two paths, for a cost of
$\tfrac{1}{2} \cdot 1 + \tfrac{1}{2} \cdot \tfrac{1}{2} =
\tfrac{3}{4}$.
The cost of the equilibrium flow is $0 \cdot 1 + 1 \cdot 1 = 1$.

The price of anarchy of a selfish routing network is defined as the
ratio between the cost of an equilibrium flow and that of an optimal
flow.\footnote{It turns out that the equilibrium flow cost is uniquely
  defined in every selfish routing network with continuous and
  nondecreasing edge cost functions; see the Notes for details.\label{foot:unique}}  In
the network in Figure~\ref{fig:pigou}(a), the price of anarchy is $4/3$.

An interesting research goal is to identify selfish routing networks
in which the price of anarchy is close to~1---networks in which
decentralized optimization by selfish users performs almost as well as
centralized optimization.  Unfortunately, without any restrictions on
edges' cost functions, the price of anarchy can be arbitrarily large.
To see this, replace the cost function on the bottom edge in
Figure~\ref{fig:pigou}(a) by the function $c(x) = x^d$ for a large
positive integer~$d$ (Figure~\ref{fig:pigou}(b)).  The equilibrium
flow and its cost remain the
same, with all selfish traffic using the bottom edge for an overall
cost of~1.  The optimal flow, however, improves with $d$: Routing
$1-\epsilon$ units of flow on the bottom edge and $\epsilon$ units on
the top edge yields a flow with cost $\epsilon+ (1-\epsilon)^{d+1}$.
This cost tends to~0 as $d$ tends to infinity and $\epsilon$ tends
appropriately to~0, and hence the price of anarchy goes to infinity
with~$d$.

\subsection{A Resource Augmentation Guarantee}

Despite the negative example above, a very general resource
augmentation guarantee holds in selfish routing networks.\footnote{This
  result holds still more generally, in networks with multiple source and
  sink vertices (Exercise~\ref{exer:mc}).}
\begin{theorem}[\citet{RT00}]\label{thm:rt}
  For every network~$G$ with nonnegative, continuous, and
  nondecreasing cost functions, for every traffic rate $r > 0$, and
  for every $\delta > 0$, the cost of an equilibrium flow in~$G$ with
  traffic rate~$r$ is at most $\tfrac{1}{\delta}$ times the cost of an
  optimal flow with traffic rate $(1+\delta)r$.
\end{theorem}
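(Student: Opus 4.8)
The plan is to reduce the theorem to two ingredients: the variational (shortest-path) characterization of an equilibrium flow, and a short edgewise inequality comparing $c_e(x)x$ with $c_e(y)x$. Let $f$ be the equilibrium flow in $G$ at traffic rate $r$ (whose existence and uniqueness I would simply quote, cf.\ footnote~\ref{foot:unique}), and freeze the edge lengths at their equilibrium values $\ell_e := c_e(f_e)$. By definition of an equilibrium flow all traffic is routed on shortest $s$-$t$ paths under these lengths, so there is a common value $L$ such that (i) every $s$-$t$ path has $\ell$-length at least $L$, and (ii) every $s$-$t$ path carrying traffic has $\ell$-length exactly $L$. Fact (ii) yields the identity $C(f) := \sum_{e\in E} c_e(f_e)f_e = rL$, since all $r$ units travel on paths of $\ell$-length exactly $L$. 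Now fix any feasible $s$-$t$ flow $f^*$ of rate $(1+\delta)r$ --- in particular the optimal one --- and observe that it suffices to prove $C(f^*) \ge \delta\cdot C(f)$, which rearranges to the claimed bound $C(f) \le \tfrac{1}{\delta}\,C(f^*)$.

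The technical core is the claim that for every nonnegative nondecreasing cost function $c_e$ and all $x,y\ge 0$,
\[
c_e(x)\,x \;\ge\; c_e(y)\,x \;-\; c_e(y)\,y.
\]
I would prove this by cases: if $x \ge y$, then $(c_e(x)-c_e(y))x \ge 0 \ge -c_e(y)y$ by monotonicity and nonnegativity; if $x < y$, then $(c_e(x)-c_e(y))x \ge (c_e(x)-c_e(y))y \ge -c_e(y)y$, again using monotonicity together with $c_e(x)\ge 0$. Summing this over all edges with $x = f^*_e$ and $y = f_e$ gives
\[
C(f^*) \;=\; \sum_{e} c_e(f^*_e)\,f^*_e \;\ge\; \sum_{e} c_e(f_e)\,f^*_e \;-\; \sum_{e} c_e(f_e)\,f_e \;=\; \sum_{e}\ell_e f^*_e \;-\; C(f).
\]

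To finish I would lower-bound $\sum_e \ell_e f^*_e$ by flow decomposition: write $f^*$ as a sum of $s$-$t$ path flows carrying $(1+\delta)r$ units in total, together with possibly some cycle flows (which contribute nonnegatively, as $\ell \ge 0$). By fact (i) each unit of path flow contributes $\ell$-length at least $L$, so $\sum_e \ell_e f^*_e \ge (1+\delta)rL$. Combining this with the previous display and the identity $C(f) = rL$,
\[
C(f^*) \;\ge\; (1+\delta)rL - rL \;=\; \delta rL \;=\; \delta\,C(f),
\]
which is exactly the inequality we wanted. The step I expect to require the most care is the equilibrium characterization itself --- justifying that a flow meeting the shortest-path condition uses only $s$-$t$ paths of a single minimum $\ell$-length $L$, and hence obeys $C(f) = rL$ (one must rule out, or argue around, flow on cycles; here this is harmless since shortest-path edges under nonnegative lengths can only close up into zero-length cycles). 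As a sanity check one can revisit the nonlinear network of Figure~\ref{fig:pigou}(b): routing $\alpha < 1$ units on the bottom edge and the remaining $1+\delta-\alpha$ on the top yields an optimal cost at rate $1+\delta$ that tends to $\delta$ as $d\to\infty$ and $\alpha\to 1$, against an equilibrium cost of $1$, so the factor $\tfrac{1}{\delta}$ cannot be improved in general.
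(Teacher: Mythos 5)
Your proposal is correct and is essentially the paper's own argument in lightly different clothing: your frozen lengths $\ell_e = c_e(f_e)$ play exactly the role of the paper's fictitious cost functions $\bar{c}_e(x) = \max\{c_e(x), c_e(f_e)\}$, your edgewise inequality $c_e(x)x \ge c_e(y)x - c_e(y)y$ is the paper's bound that the fictitious cost of $f^*$ exceeds its real cost by at most the equilibrium cost $rL$, and the lower bound $\sum_e \ell_e f^*_e \ge (1+\delta)rL$ via path decomposition matches the paper's observation that $f^*$ routes $(1+\delta)r$ units on paths of (fictitious) length at least $L$. No gaps; the conclusion $C(f^*) \ge \delta r L = \delta\, C(f)$ follows just as in the paper.
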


For example, consider the network in Figure~\ref{fig:pigou}(b) with
$r = \delta = 1$ (and large~$d$).  The cost of the equilibrium flow
with traffic rate~1 is~1.  The optimal flow can route one unit of
traffic cheaply (as we've seen), but then the network gets clogged
up and it has no choice but to incur one unit of cost on the second
unit of flow (the best it can do is route it on the top edge).  Thus
the cost of an optimal flow with double the traffic exceeds that of
the original equilibrium flow.

Theorem~\ref{thm:rt} can be reformulated 
as a comparison between an
equilibrium flow in a network with ``faster'' edges and an optimal flow
in the original network.  For example, simple calculations
(Exercise~\ref{exer:bicrit}) show that the following statement is
equivalent to Theorem~\ref{thm:rt} with $\delta=1$.
\begin{corollary}\label{cor:bicrit}
  For every network~$G$ with nonnegative, continuous, and
  nondecreasing cost functions and for every traffic rate $r > 0$, the
cost of an equilibrium flow in~$G$ with traffic rate~$r$ and cost
functions~$\{ \tilde{c}_e \}_{e \in E}$ is at most that of
an optimal flow in~$G$ with traffic rate $r$ and cost functions $\{
c_e \}_{e \in E}$, where each function $\tilde{c}_e$ is derived from
$c_e$ as $\tilde{c}_e(x) = c_e(x/2)/2$.
\end{corollary}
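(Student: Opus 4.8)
The plan is to obtain the corollary as a two-way consequence of Theorem~\ref{thm:rt} with $\delta=1$ by isolating a single ``flow-scaling'' fact; once that fact is in place, the equivalence is pure bookkeeping. Throughout, write $\mathrm{Eq}(G,\{c_e\},r)$ and $\mathrm{Opt}(G,\{c_e\},r)$ for the cost of an equilibrium flow and of an optimal flow, respectively, in $G$ with traffic rate~$r$ under cost functions $\{c_e\}$; the former is a well-defined number by footnote~\ref{foot:unique}. Note at the outset that $\tilde c_e(x)=c_e(x/2)/2$ inherits nonnegativity, continuity, and monotonicity from $c_e$, and likewise $x\mapsto 2c_e(2x)$ does, so every instance we form stays within the hypotheses of Theorem~\ref{thm:rt}.

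The crux is the following \emph{scaling lemma}: for any $s>0$, the map $f\mapsto 2f$ (double the flow on every edge) is a bijection from $s$-$t$ flows of rate~$s$ to $s$-$t$ flows of rate~$2s$, and with respect to the pair $(\{c_e\},\{\tilde c_e\})$ it both preserves total cost and halves every $s$-$t$ path length. Indeed, the contribution of edge~$e$ to total cost becomes $\tilde c_e(2f_e)\cdot(2f_e)=\tfrac12 c_e(f_e)\cdot 2f_e=c_e(f_e)f_e$, and the length of a path~$P$ becomes $\sum_{e\in P}\tilde c_e(2f_e)=\tfrac12\sum_{e\in P}c_e(f_e)$. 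Since multiplying all path lengths by the positive constant $\tfrac12$ leaves the set of shortest $s$-$t$ paths unchanged (and $f$ and $2f$ have the same support), $f$ is an equilibrium flow at rate~$s$ under $\{c_e\}$ iff $2f$ is an equilibrium flow at rate~$2s$ under $\{\tilde c_e\}$; since total cost is preserved and the map is onto, $f$ is an optimal flow at rate~$s$ under $\{c_e\}$ iff $2f$ is an optimal flow at rate~$2s$ under $\{\tilde c_e\}$. Combining these correspondences with cost-preservation yields the two identities
\[
\mathrm{Eq}(G,\{\tilde c_e\},2s)=\mathrm{Eq}(G,\{c_e\},s),\qquad \mathrm{Opt}(G,\{\tilde c_e\},2s)=\mathrm{Opt}(G,\{c_e\},s).
\]

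Given the lemma, both directions fall out by rate arithmetic. For ``Theorem~$\Rightarrow$~Corollary,'' apply Theorem~\ref{thm:rt} with $\delta=1$ at traffic rate $r/2$ to get $\mathrm{Eq}(G,\{c_e\},r/2)\le\mathrm{Opt}(G,\{c_e\},r)$, and rewrite the left-hand side via the first identity (with $s=r/2$) as $\mathrm{Eq}(G,\{\tilde c_e\},r)$, which is exactly the claimed inequality. For ``Corollary~$\Rightarrow$~Theorem'' (with $\delta=1$): given $G$, $\{c_e\}$, and $r$, set $d_e(x)=2c_e(2x)$, which satisfies the hypotheses and for which $\tilde d_e=c_e$; applying the corollary to $(G,\{d_e\},r)$ gives $\mathrm{Eq}(G,\{c_e\},r)\le\mathrm{Opt}(G,\{d_e\},r)$, and the second identity with $c_e$ replaced by $d_e$ (so $\tilde d_e=c_e$, taken at $s=r$) rewrites the right-hand side as $\mathrm{Opt}(G,\{c_e\},2r)$. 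The one place needing care --- the main obstacle --- is justifying the equilibrium correspondence: one must verify that scaling every edge latency by the constant $\tfrac12$ does not change which $s$-$t$ paths are shortest (hence preserves the ``all flow on shortest paths'' condition), and one must invoke the uniqueness of the equilibrium cost (footnote~\ref{foot:unique}) to treat $\mathrm{Eq}(\cdot)$ as a well-defined quantity rather than a property of a particular flow; after that, the rate-scaling substitutions are routine.
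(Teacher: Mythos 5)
Your proof is correct and follows essentially the route the paper intends (the paper defers the argument to Exercise~\ref{exer:bicrit}, calling it ``simple calculations''): the flow-doubling correspondence $f \mapsto 2f$, which preserves costs and shortest paths between $\{c_e\}$ at rate $s$ and $\{\tilde c_e\}$ at rate $2s$, combined with Theorem~\ref{thm:rt} at $\delta=1$ applied at rate $r/2$. The additional ``Corollary $\Rightarrow$ Theorem'' direction is not needed for the statement itself, but it correctly substantiates the paper's claim that the corollary is equivalent to Theorem~\ref{thm:rt} with $\delta=1$.
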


Corollary~\ref{cor:bicrit} takes on a particularly appealing form
in networks with M/M/1 delay functions, meaning cost functions
of the form $c_e(x) = 1/(u_e-x)$, where $u_e$ can be interpreted as
an edge capacity or a queue service rate.  (If $x \ge u_e$,
interpret~$c_e(x)$ as $+\infty$.)  In this case,
the modified function $\tilde{c}_e$ in Corollary~\ref{cor:bicrit} is
\[
\tilde{c}_e(x) = \frac{1}{2(u_e - \tfrac{x}{2})} = \frac{1}{2u_e - x}.
\]  
Corollary~\ref{cor:bicrit} thus translates to the following design
principle for selfish routing networks with M/M/1 delay functions:
to outperform optimal routing, double the capacity of every edge.

\subsection{Proof of Theorem~\ref{thm:rt} (Parallel Edges)}\label{ss:rtpf1}

As a warm-up to the proof of Theorem~\ref{thm:rt}, consider the special
case where~$G=(V,E)$ is a network of parallel edges, meaning $V =
\{s,t\}$ and every edge of~$E$ is directed from~$s$ to~$t$ (as in
Figure~\ref{fig:pigou}).  Choose a traffic rate~$r > 0$; a
cost function~$c_e$ for each edge $e \in
E$ that is nonnegative, continuous, and nondecreasing; and the
parameter~$\delta > 0$.  Let~$f$ and~$f^*$ denote equilibrium
and optimal flows in~$G$ at traffic rates~$r$ and~$(1+\delta)r$,
respectively.  
The equilibrium flow~$f$ routes traffic only on shortest paths, so there
is a number~$L$ (the shortest $s$-$t$ path length) such that 
\[
\begin{array}{cl}
c_e(f_e)  = L & \text{if $f_e > 0$;}\\
c_e(f_e) \ge L & \text{if $f_e = 0$.}
\end{array}
\]
The cost of the equilibrium flow~$f$ is then
\[
\sum_{e \in E} c_e(f_e)f_e
= \sum_{e \in E \,: \, f_e > 0} c_e(f_e)f_e
= \sum_{e \in E \,: \, f_e > 0} L \cdot f_e = r \cdot L,
\]
as the total amount of flow~$\sum_{e \,:\, f_e > 0}  f_e$ equals the
traffic rate~$r$.

How can we bound from below the cost of the optimal flow~$f^*$,
relative to the cost~$rL$ of~$f$?  To proceed, bucket the edges of~$E$
into two categories: 
\[
\begin{array}{cl}
E_1 &\text{:= the edges~$e$ with $f^*_e \ge f_e$;}\\
E_2 &\text{:= the edges~$e$ with $f^*_e < f_e$.}
\end{array}
\]
With so few assumptions on the network cost functions, we can't say
much about the costs of edges under the optimal flow~$f^*$.  The two
things we {\em can} say are that $c_e(f^*_e) \ge L$ for all
$e \in E_1$ (because cost functions are nondecreasing) and that
$c_e(f^*_e) \ge 0$ for all $e \in E_2$ (because cost functions are
nonnegative).  At the very least, we can therefore lower bound the
cost of~$f^*$ by
\begin{equation}\label{eq:sr1}
\sum_{e \in E} c_e(f^*_e)f^*_e \ge 
\sum_{e \in E_1} c_e(f^*_e)f^*_e \ge 
L \cdot \sum_{e \in E_1} f^*_e.
\end{equation}
How little traffic could~$f^*$ possibly route on the edges of~$E_1$?
The flow routes $(1+\delta)r$ units of traffic overall.  It routes
less flow than~$f$ on the edges of~$E_2$ (by the definition of~$E_2$),
and~$f$ routes at most~$r$ units (i.e., its full traffic rate) on these
edges.  Thus
\begin{equation}\label{eq:sr2}
\sum_{e \in E_1} f^*_e 
= (1+\delta)r - \sum_{e \in E_2} f^*_e
\ge (1+\delta)r - \underbrace{\sum_{e \in E_2} f_e}_{\le r}
\ge \delta r.
\end{equation}
Combining the inequalities~\eqref{eq:sr1} and~\eqref{eq:sr2} shows
that the cost of~$f^*$ is at least $\delta \cdot rL$, which is
$\delta$ times the cost of~$f$, as desired.

\subsection{Proof of Theorem~\ref{thm:rt} (General Networks)}

Consider now the general case of Theorem~\ref{thm:rt}, in which the
network~$G=(V,E)$ is arbitrary.  General networks are more complex
than networks of parallel edges because there is no longer a
one-to-one correspondence between edges and paths---a path might
comprise many edges, and an edge might participate in many different
paths.  This complication aside, the proof proceeds similarly to that
for the special case of networks of parallel edges.

Fix a traffic rate~$r$, a cost function~$c_e$ for each edge $e \in E$,
and the parameter $\delta > 0$.  As before, let~$f$ and~$f^*$ denote
equilibrium and optimal flows in~$G$ at traffic rates~$r$
and~$(1+\delta)r$, respectively.  It is still true that there is a
number~$L$ such that all traffic in~$f$ is routed on paths~$P$ with
length $\sum_{e \in P} c_e(f_e)$ equal to~$L$, and such that all
$s$-$t$ paths have length at least~$L$.  The cost of the equilibrium
flow is again~$rL$.

The key trick in the proof is to replace, for the sake of analysis,
each cost function $c_e(x)$ (Figure~\ref{fig:bicrit}(a)) by the larger
cost function $\bar{c}_e(x) = \max\{ c_e(x), c_e(f_e) \}$
(Figure~\ref{fig:bicrit}(b)).  This trick substitutes for the
decomposition in Section~\ref{ss:rtpf1} of~$E$ into~$E_1$ and~$E_2$.
With the fictitious cost functions $\bar{c}_e$, edge costs are always
as large as if the equilibrium flow~$f$ had already been routed in the
network.

\begin{figure}[t]
\begin{center}
\mbox{\subfigure[Graph of cost function $c_e$ and
its value at flow value
$f_e$]{\epsfig{file=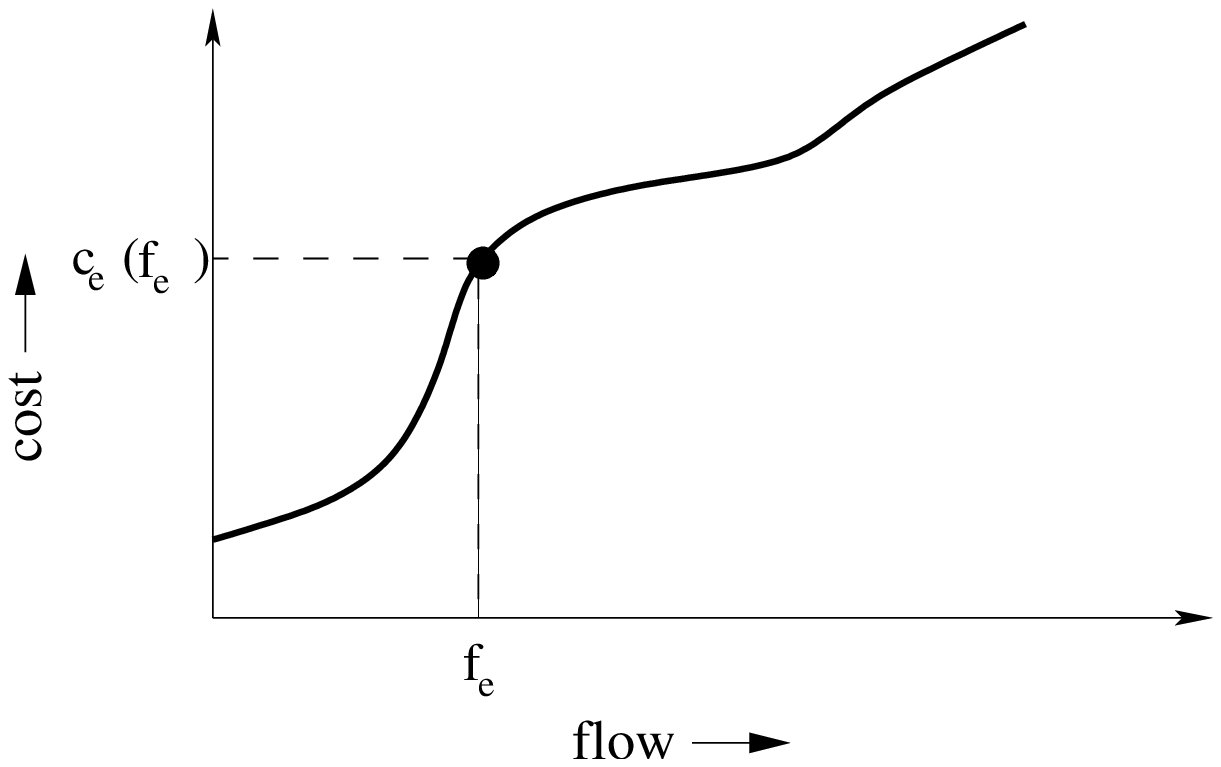,width=.40\textwidth}}\qquad\qquad
\subfigure[Graph of cost function
$\bar{c}_e$]{\epsfig{file=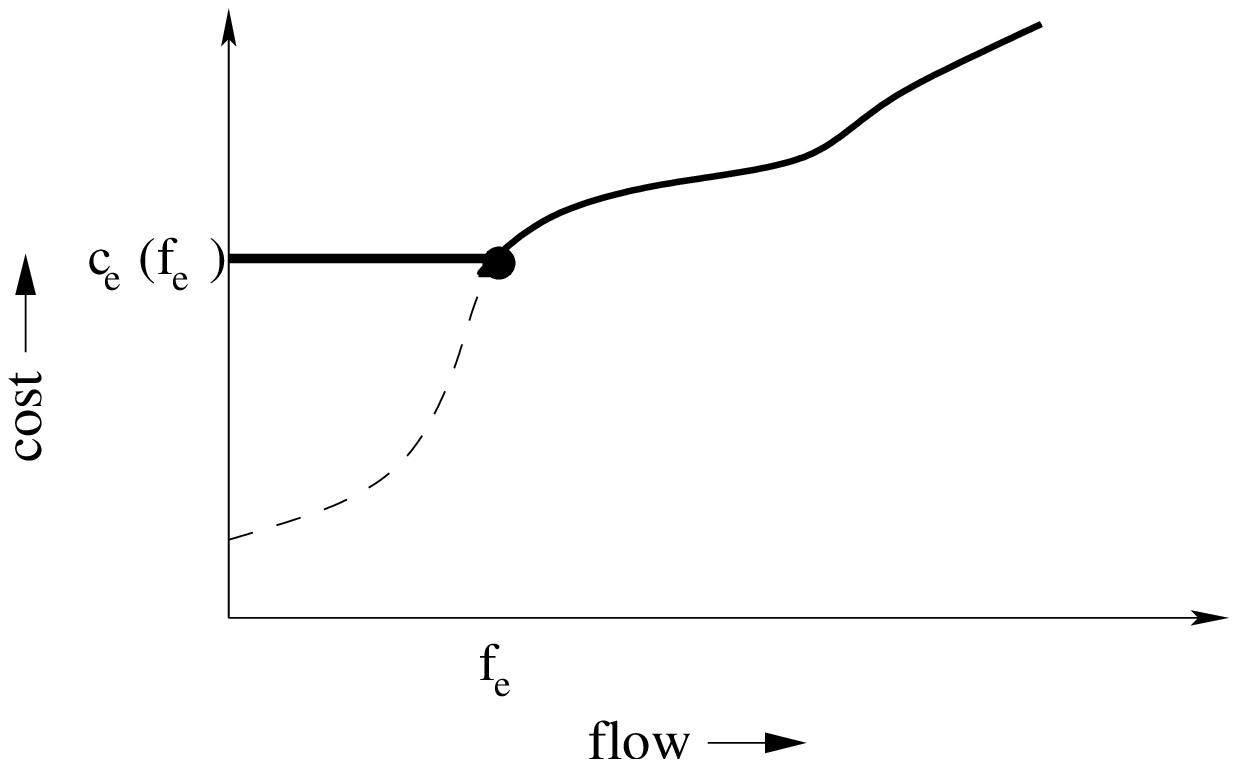,width=.40\textwidth}}}
\caption[Proof of Theorem~\ref{thm:bicrit}]{Construction in the proof
of Theorem~\ref{thm:rt} of the fictitious cost function $\bar{c}_e$ 
from the original cost function $c_e$ and equilibrium flow value $f_e$.}\label{fig:bicrit}
\end{center}
\end{figure}

By design, the cost of the optimal flow~$f^*$ is easy to bound from
below with the fictitious cost functions.  
Even with zero flow in the network,
every $s$-$t$ path has cost at least~$L$ with respect to these functions.  
Because~$f^*$ routes $(1+\delta)r$ units of traffic on
paths with (fictitious) cost at least~$L$, its total (fictitious) cost
with respect to the $\bar{c}_e$'s is at least $(1+\delta)rL$.

We can complete the proof by showing that the fictitious cost of $f^*$
(with respect to the $\bar{c}_e$'s) exceeds its real cost
(with respect to the $c_e$'s) by at most $rL$, the equilibrium flow
cost.  For each edge~$e \in E$ and $x \ge 0$, 
$\bar{c}_e(x) - c_e(x)$ is either~0 (if $x \geq f_e$)
or bounded above by $c_e(f_e)$ (if $x < f_e$); in any case,   
\[
\underbrace{\bar{c}_e(f^*_e)f^*_e}_{\text{fictitious cost of~$f^*$
    on~$e$}} - \underbrace{c_e(f^*_e)f^*_e}_{\text{real cost of~$f^*$
    on~$e$}} \le \underbrace{c_e(f_e)f_e}_{\text{real cost of~$f$ on~$e$}}.
\]
Summing this inequality over all edges $e \in E$ shows
that the difference between the costs of~$f^*$ with respect to the
different cost functions is at most the cost of~$f$ (i.e., $rL$);
this completes the proof of Theorem~\ref{thm:rt}.

\section{Speed Scaling in Scheduling}\label{s:sched}

The lion's share of killer applications of resource augmentation
concern scheduling problems.  This section describes one paradigmatic
example.

\subsection{Non-Clairvoyant Scheduling}

We consider a model with a single machine and
$m$ jobs that arrive online.
Each job~$j$ has a release time $r_j$ and the algorithm is unaware of
the job before this time.
Each job~$j$ has a processing time~$p_j$,
indicating how much machine time is necessary to complete it.
We assume that preemption is allowed, meaning that a job can be
stopped mid-execution and restarted from the same point (with no loss)
at a subsequent time.

We consider the basic objective of minimizing the total flow
time:\footnote{This objective is also called the total response time.}
\[
\sum_{j=1}^m \left( C_j - r_j \right),
\]
where $C_j$ denotes the completion time of job~$j$.
For an alternative formulation, note that each infinitesimal time interval
$[t,t+dt]$ contributes $dt$ to the flow time~$C_j-r_j$ of every job
that is {\em active} at time~$t$, meaning released but not yet
completed.  Thus, the total flow time can be written as
\begin{equation}\label{eq:obj2}
\int_0^{\infty} |X_t|dt,
\end{equation}
where~$X_t$ denotes the active jobs at time~$t$.

The {\em shortest
remaining processing time (SRPT)} algorithm always processes the
job that is closest to completion (preempting jobs as needed).
This algorithm makes~$|X_t|$ as small as possible for all times~$t$
(Exercise~\ref{exer:srpt}) and is therefore optimal.  This is a rare
example of a problem where the optimal offline algorithm is
implementable as an online algorithm.

SRPT uses knowledge of the job processing times to make decisions, and as such
is a {\em clairvoyant} algorithm.  What about applications in which
a job's processing time is not known before it completes, where a {\em
  non-clairvoyant} algorithm is called for?
No non-clairvoyant online algorithm can guarantee a total flow time
close to that achieved by SRPT (Exercise~\ref{exer:nonclairvoyant}).
Could a resource augmentation approach provide more helpful
algorithmic guidance?

\subsection{A Resource Augmentation Guarantee for SETF}\label{ss:setf}

The natural notion of a ``resource'' in this scheduling problem is
processor speed.  Thus, a resource augmentation guarantee would assert
that the total flow time of some non-clairvoyant protagonist {\em with
  a faster machine} is close to that of SRPT with the original
machine.

We prove such a guarantee for the {\em shortest elapsed time first
  (SETF)} algorithm, which always processes the job that has been
processed the least so far.  When multiple jobs are tied for the
minimum elapsed time, the machine splits its processing power equally
between them.  SETF does not use jobs' processing times to make
decisions, and as such is a non-clairvoyant algorithm.

\begin{example}\label{ex:setf}
Fix parameters~$\eps,\delta > 0$, with $\delta$ much smaller than
$\eps$.  With an eye toward a resource augmentation guarantee, we
compare the total flow time of SETF with a machine with
speed~$1+\eps$---meaning that the machine can process~$(1+\eps)t$ units of
jobs in a time interval of length~$t$---to that of SRPT with a unit-speed
machine.

Suppose~$m$ jobs arrive at times~$r_1=0, r_2 = 1, \ldots, r_m = m-1$,
where $m$ is $\lfloor \tfrac{1}{\eps} \rfloor - 1$.
Suppose~$p_j = 1+\eps + \delta$ for every job~$j$.  Under the SRPT
algorithm, assuming that~$\eps+\delta$ is sufficiently small, there
will be at most~2 active jobs at all times (the most recently released
jobs); using~\eqref{eq:obj2}, the total flow time of its schedule
is~$O(\tfrac{1}{\eps})$.  The SETF algorithm will not complete any
jobs until after time~$m$, so in each time interval $[j-1,j]$ there
are~$j$ active jobs.  Using~\eqref{eq:obj2} again, the total flow time
of SETF's schedule is $\Omega(\tfrac{1}{\eps^2})$.
\end{example}

Example~\ref{ex:setf} shows that SETF is not optimal, and it
draws a line in the sand: The best we can hope for is that the SETF
algorithm with a $(1+\eps)$-speed machine achieves total flow
time~$O(\tfrac{1}{\eps})$ times that suffered by the SRPT algorithm with a
unit-speed machine.  The main result of this section states that this
is indeed the case.
\begin{theorem}[\citet{KP00}]\label{t:kp00}
  For every input and $\eps > 0$, the total flow time of the schedule
  produced by the SETF algorithm with a machine with speed $1+\eps$
  is at most
\[
1 + \frac{1}{\eps}
\]
times that by the SRPT algorithm with a unit-speed machine.
\end{theorem}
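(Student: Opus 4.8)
The plan is to reduce everything to the alternative flow-time formula~\eqref{eq:obj2}. Write $n_A(t)$ for the number of jobs active under an algorithm $A$ at time $t$, so that the total flow time of $A$ equals $\int_0^\infty n_A(t)\,dt$. Since SRPT on a unit-speed machine is optimal (as noted above), it suffices to compare SETF directly to it, i.e.\ to show that $\int_0^\infty n_{\mathrm S}(t)\,dt \le (1+\tfrac1\eps)\int_0^\infty n_{\mathrm O}(t)\,dt$, where $\mathrm S$ denotes SETF run at speed $1+\eps$ and $\mathrm O$ denotes SRPT run at unit speed.

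Before setting up the comparison I would record the structure that the $(1+\eps)$ speedup gets played against. Let $a_j(t)$ be the amount of processing SETF has devoted to job $j$ by time $t$ (its ``elapsed time''); it is nondecreasing, zero before $r_j$, and $j$ is active under $\mathrm S$ exactly while $r_j\le t$ and $a_j(t)<p_j$. SETF's rule is that at each instant the full speed-$(1+\eps)$ power is split among the active jobs of minimum elapsed time. The consequence I want is a ``laminar'' bound: a job can absorb at most $\ell$ units of SETF-processing while its elapsed time is below $\ell$, so a large number of active low-elapsed-time jobs at time $t$ forces SETF to have performed a correspondingly large amount of work on low-elapsed-time pieces. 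Both $\mathrm S$ and $\mathrm O$ are work-conserving and face the same arrivals, so $\mathrm S$ has weakly less total unfinished work than $\mathrm O$ at every instant; but at a \emph{low level} $\ell$ the inequality can reverse (SETF keeps many jobs jointly at small elapsed time rather than rushing any one to completion), which is exactly the phenomenon behind Example~\ref{ex:setf}, and it is the spare $\eps$ fraction of SETF's throughput that will be made to pay for this level-wise surplus.

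The heart of the proof is then an amortized charging of $n_{\mathrm S}(t)$ to the SRPT configuration. I would try to exhibit a potential function $\Phi(t)\ge 0$, built from the level-by-level comparison of unfinished work in $\mathrm S$ versus $\mathrm O$ (for each level $\ell$, the amount by which SETF's unfinished-work-counted-at-most-$\ell$-per-job exceeds SRPT's), scaled by $\tfrac1\eps$ and normalized to have the dimensions of flow time, with $\Phi(0)=\Phi(\infty)=0$, and such that at every instant
$n_{\mathrm S}(t) + \tfrac{d}{dt}\Phi(t) \le (1+\tfrac1\eps)\,n_{\mathrm O}(t)$.
Integrating over $t$ then yields the theorem. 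Job arrivals and completions are easy: an arrival raises the unfinished work of $\mathrm S$ and $\mathrm O$ identically and so cannot increase the (max-with-zero) surplus, and completions only help. The real content is the ``running'' term: SETF devotes its entire $(1+\eps)$-rate budget to its very youngest active jobs, so it drains the level-wise surplus at a rate proportional to $\eps$ times the number of those jobs, which must be arranged to dominate $n_{\mathrm S}(t) - (1+\tfrac1\eps)n_{\mathrm O}(t)$; meanwhile SRPT processing its shortest-remaining job at unit rate can raise the surplus by at most $\tfrac1\eps\,n_{\mathrm O}(t)$.

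The main obstacle is precisely this running term, and the reason it cannot be avoided is the non-local-competitiveness exhibited in Example~\ref{ex:setf}: at a single instant SETF can carry $\Omega(1/\eps)$ times as many active jobs as SRPT, so no pointwise bound $n_{\mathrm S}(t)\le(1+\tfrac1\eps)n_{\mathrm O}(t)$ holds and the potential is indispensable. Pinning down the correct level-wise weighting in $\Phi$ so that both the SETF-running and the SRPT-running estimates go through simultaneously is the step I expect to be delicate, and it is where the speed $1+\eps$ and the constant $1+1/\eps$ are forced. (Alternatively one can bypass an explicit potential and run the original Kalyanasundaram--Pruhs bookkeeping: for each active SETF job bound the length of the interval over which SETF was ``too busy with younger work'' to finish it, sum these lengths, and match them against the SRPT flow time --- but this is essentially the same argument unrolled, and no less technical.)
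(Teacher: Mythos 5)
There is a genuine gap, and it begins with a misreading of Example~\ref{ex:setf}. You assert that ``no pointwise bound $n_{\mathrm S}(t)\le(1+\tfrac1\eps)n_{\mathrm O}(t)$ holds and the potential is indispensable.'' But the example only shows that at time $t=m$ SETF has about $\tfrac1\eps$ active jobs while SRPT has one, i.e., the instantaneous ratio of active-job counts can be roughly $\tfrac1\eps$ --- which is exactly what the bound $1+\tfrac1\eps$ permits. The paper's proof goes through precisely the pointwise inequality you dismiss: Lemma~\ref{l:kp00} states that at every time $t$, $|X_t|\le\bigl(1+\tfrac1\eps\bigr)|X^*_t|$, and Example~\ref{ex:setf} is presented there as an instance where this is nearly tight, not violated. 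So the premise that forces you onto the amortized route is false, and the simpler route --- prove the per-time inequality and integrate via~\eqref{eq:obj2} --- is available. The paper carries it out with interference sets: for each job $j\in X_t\sm X^*_t$ one finds an interval $[s_j,t]$ in which SETF does $(1+\eps)(t-s_j)$ units of work, argues that the $\eps(t-s_j)$ excess over SRPT's work there must be devoted to jobs of $X^*_t$ whose elapsed times are at most $w_j$, and a charging/flow argument then gives $|X^*_t|\ge\eps\,|X_t\sm X^*_t|$, hence the lemma.

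Independently of that misdiagnosis, your proposal does not actually contain a proof. The potential $\Phi$ is never defined beyond ``built from the level-by-level comparison of unfinished work,'' and you yourself flag that choosing the level-wise weighting so that the running condition $n_{\mathrm S}(t)+\tfrac{d}{dt}\Phi(t)\le(1+\tfrac1\eps)n_{\mathrm O}(t)$ holds is the delicate step. That step is the entire content of the theorem: without an explicit $\Phi$ and a verification of the arrival, completion, and running cases --- including that the constant comes out as exactly $1+\tfrac1\eps$ rather than some unspecified $O(\tfrac1\eps)$ --- nothing has been established. An amortized local-competitiveness analysis may well be possible, but as written this is a plan whose crucial component is missing, and the stated reason for preferring it over the direct pointwise argument is incorrect.
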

Using the second version~\eqref{eq:obj2} of the objective function,
Theorem~\ref{t:kp00} reduces to the following pointwise (over time)
bound.
\begin{lemma}\label{l:kp00}
Fix $\eps > 0$.
For every input, at every time step~$t$, 
\[
|X_t| \le \left( 1 + \frac{1}{\eps} \right) |X^*_t|,
\]
where~$X_t$ and~$X^*_t$ denote the jobs active at time~$t$ under SETF
with a $(1+\eps)$-speed machine and SRPT with a unit-speed machine,
respectively.
\end{lemma}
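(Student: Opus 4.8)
The plan is to fix a time $t$ and compare the active-job sets $X_t$ (under SETF at speed $1+\eps$) and $X^*_t$ (under SRPT at speed $1$) by charging the jobs in $X_t$ against jobs in $X^*_t$, using the amount of processing that SETF has already done on each job as the bookkeeping device. The guiding intuition is that SETF spreads its (faster) machine over the least-processed jobs, so a job that is still active under SETF at time $t$ has received relatively little processing; meanwhile SRPT, being optimal, has not been able to clear all its work either unless there genuinely was little work released recently. First I would classify each job $j \in X_t$ by its \emph{elapsed processing} $q_j$ under SETF at time $t$. Because SETF always works on the minimum-elapsed-time job(s), at any moment every job SETF has touched has been worked on for at least as long as every job released later; this monotonicity is the structural fact I will lean on throughout.

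The key step is a \emph{volume (work) accounting} argument at a well-chosen earlier time. Let $t_0 \le t$ be the last time before $t$ at which SETF's machine was idle (or $t_0 = 0$ if it was never idle on $[0,t]$); then on $[t_0,t]$ SETF has processed exactly $(1+\eps)(t-t_0)$ units of work, all of it on jobs released in $[t_0,t]$, and moreover every job in $X_t$ was released in $[t_0,t]$ — otherwise SETF, working continuously, would have finished it. Now I split $X_t$ into the ``large-elapsed'' jobs and the ``small-elapsed'' jobs relative to a threshold tied to the busy-interval length. The large-elapsed jobs are few: each has absorbed a definite chunk of the $(1+\eps)(t-t_0)$ units SETF processed, so their count is bounded by $(1+\eps)(t-t_0)$ over the threshold. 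For the small-elapsed jobs, I argue that SRPT cannot have finished them either, because the total work released in $[t_0,t]$ that SRPT must handle (its unit-speed machine has only $t-t_0$ time, while this much work plus whatever SRPT was already carrying is present) forces $|X^*_t|$ to be large enough; quantitatively, the work still outstanding at SETF-level ($\sum_{j\in X_t} (p_j - q_j)$, which is at least the small-elapsed count times (threshold's complement)) is a lower bound on the work SRPT still has to do past time $t$, hence a lower bound (after dividing by the maximum remaining size, or more carefully via an SRPT-is-optimal comparison) on $|X^*_t|$. Choosing the threshold as $\tfrac{1}{1+\eps}$ times the ``average'' per-job share balances the two estimates and produces the constant $1 + 1/\eps$: the $\eps$-fraction of extra speed is exactly what lets SETF drain enough of the backlog that the residual it hands to the comparison is a $(1+1/\eps)^{-1}$ fraction of $|X_t|$.

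I expect the main obstacle to be making the second (small-elapsed) bound rigorous without circularity: SRPT's optimality is a statement about \emph{total} flow time, not a pointwise one against SETF's schedule, so I cannot directly say ``SRPT has at least as much work left as SETF.'' The honest route is a direct argument — pick the earliest release time $r^\star$ among jobs in $X_t$ whose SETF-elapsed time is below the threshold, observe that \emph{all} jobs SETF has seen since $r^\star$ have small elapsed time (by the elapsed-time monotonicity), hence a large aggregate of unfinished work $W$ was released in $[r^\star,t]$; since any algorithm, SRPT included, with a unit-speed machine can complete at most $t-r^\star$ units of that work by time $t$, and one shows $W - (t-r^\star)$ is still a $(1 - \tfrac{1}{1+\eps})$-ish fraction of the original, at least that many jobs released in $[r^\star,t]$ remain active under SRPT, giving $|X^*_t| \ge \tfrac{\eps}{1+\eps}|X_t|$. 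Reconciling the ``number of jobs'' and ``amount of work'' viewpoints (jobs have heterogeneous sizes $p_j$) is the delicate part, and I anticipate needing the threshold to be stated in terms of elapsed time relative to the busy interval rather than relative to $p_j$, so that it is SETF-measurable and the monotonicity applies cleanly; once that is set up, combining the large-elapsed count bound with this residual bound yields $|X_t| \le (1 + 1/\eps)|X^*_t|$ and, integrating over $t$ via \eqref{eq:obj2}, Theorem~\ref{t:kp00}.
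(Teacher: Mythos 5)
Your outline correctly captures the first two steps of the paper's intuition---over a suitable interval SETF does $(1+\eps)$ times as much work as SRPT, and the excess must sit on jobs of $X^*_t$---and your busy-period accounting on $[t_0,t]$ is a coarser, single-interval version of what the paper proves per job via interference sets (Propositions~\ref{prop:inf1}, \ref{prop:inf2}, and~\ref{prop:inf4}). The genuine gap is in the step that converts \emph{work} into a \emph{count} of jobs of $X^*_t$. Your small-elapsed branch infers ``$W-(t-r^\star)$ units of work released in $[r^\star,t]$ are unfinished under SRPT, hence at least that many jobs remain active under SRPT.'' That inference is invalid: the leftover work under SRPT can be concentrated in a single job of enormous processing time (a small SETF-elapsed time $q_j$ places no upper bound on $p_j$ or on SRPT's per-job remaining work), so much leftover work is compatible with $|X^*_t|=1$. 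Dividing by ``the maximum remaining size,'' as you parenthetically suggest, gives nothing, and no choice of threshold fixes this. The large-elapsed branch has a matching problem: the bound $(1+\eps)(t-t_0)/\theta$ on the number of such jobs is never related to $|X^*_t|$, and the lemma needs a bound relative to $|X^*_t|$, not an absolute one.

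The missing idea---and the heart of the paper's proof---is a charging scheme normalized by SETF-elapsed times. Using the nested structure $I_j \supseteq \{1,2,\ldots,j\}$ (Proposition~\ref{prop:inf5}) one gets $\sum_{i \in I_j \cap X^*_t} w_i \ge \eps \sum_{\ell=1}^{j} w_\ell$ for every $j$ (Corollary~\ref{cor:inf}), which supports charges $\alpha(j,i)$ in which each $j \in X_t \sm X^*_t$ emits exactly $\eps w_j$, each $i \in X^*_t$ receives at most $w_i$, and $\alpha(j,i)>0$ only if $i \in I_j$, so that Proposition~\ref{prop:inf3} gives $w_i \le w_j$ for every charged pair. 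Normalizing the emitted charge by $w_j$ and the received charge by $w_i$ (the ``expansive flow'') is precisely what turns work into counts: total outflow $\eps\,|X_t \sm X^*_t|$ is at most total inflow $|X^*_t|$, whence $|X_t| \le (1+\tfrac{1}{\eps})|X^*_t|$. The per-job capacity $w_i$ on the receiving side is exactly what rules out the ``one huge SRPT job absorbs all the leftover work'' scenario that breaks your sketch, and I do not see how to repair your busy-period/threshold argument without reintroducing something equivalent to the interference sets and Proposition~\ref{prop:inf3}. (A minor further caveat: your monotonicity claim that every job SETF has touched has elapsed time at least that of every later-released job is true only among jobs still active at the time in question; the paper's Proposition~\ref{prop:inf3} is the correct statement for interference sets, which may contain completed jobs, and it requires the inductive proof given there.)
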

In Example~\ref{ex:setf}, at time~$t=m$, $|X^*_t| = 1$ (provided
$\eps,\delta$ are sufficiently small) while $|X_t|
= m \approx \tfrac{1}{\eps}$.  Thus, every inequality used in the
proof of Lemma~\ref{l:kp00} should hold almost with equality for the
instance in Example~\ref{ex:setf}.  The reader is encouraged to keep
this example in mind throughout the proof.

To describe the intuition behind Lemma~\ref{l:kp00}, fix a time~$t$.
Roughly: 
\begin{enumerate}

\item SRPT must have spent more time processing the jobs of~$X_t \sm
  X^*_t$ than SETF (because SRPT finished them by time~$t$ while SETF
  did not).

\item SETF performed $1+\eps$ times as much job processing as SRPT,
  an $\eps$ portion of which must have been devoted to the jobs of
  $X^*_t$. 

\item Because SETF prioritizes the jobs that have been processed the
  least, it also spent significant time processing the jobs of~$X_t
  \sm X^*_t$.

\item SRPT had enough time to complete all the jobs of $X_t \sm X^*_t$
  by time~$t$, so there can't be too many such jobs.

\end{enumerate}
The rest of this section supplies the appropriate details.

\subsection{Proof of Lemma~\ref{l:kp00}: Preliminaries}

Fix an input and a time~$t$, with $X_t$ and $X^*_t$ defined as in
Lemma~\ref{l:kp00}.  Rename the jobs of $X_t \sm X^*_t =
\{1,2,\ldots,k\}$ such that $r_1 \ge r_2 \ge \cdots \ge r_k$.

Consider the execution of the SETF algorithm with a $(1+\eps)$-speed
machine.  We say that job~$\ell$ {\em interferes} with job~$j$ if
there is a time $s \le t$ at which~$j$ is active and~$\ell$ is
processed in parallel with or instead of~$j$.  The {\em interference
  set} $I_j$ of a job~$j$ is the transitive closure of the
interference relation:
\begin{enumerate}

\item Initialize $I_j$ to $\{j\}$.

\item While there is a job $\ell$ that interferes with a job
  of~$I_j$, add one such job to~$I_j$.

\end{enumerate}
In Example~\ref{ex:setf} with $t=+\infty$, the interference set of
every job is the set of all jobs (because all of the jobs are
processed in parallel at the very end of the algorithm).  If
instead~$t=m$, then $I_j=\{j,j+1,\ldots,m\}$ for each job $j \in \{1,2,\ldots,m\}$.

The interference set of a job is uniquely defined, independent of
which interfering job is chosen in each iteration of the while loop.
Note that the interference set can contain jobs that were completed by
SETF strictly before time~$t$.

We require several properties of the interference sets of the jobs in
$X_t \sm X^*_t$.  To state the first, define the {\em lifetime} of a
job~$j$ as the interval~$[r_j,\min\{C_j,t\}]$ up to time~$t$ during
which it is active.
\begin{proposition}\label{prop:inf1}
Let $j \in \{ 1,2,\ldots,k\}$ be a job of $X_t \sm X^*_t$.
The union of the lifetimes of the jobs in an interference set~$I_j$ is
the interval~$[s_j,t]$, where~$s_j$ is the earliest release time of a
job in~$I_j$.
\end{proposition}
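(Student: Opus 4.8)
The plan is to show that the union of lifetimes $\bigcup_{i \in I_j} [r_i, \min\{C_i, t\}]$ is a single interval with left endpoint $s_j := \min_{i \in I_j} r_i$ and right endpoint $t$. Two things must be established: (i) the union is an interval (i.e.\ it has no gaps), and (ii) its right endpoint is exactly $t$, not something smaller. The left endpoint is $s_j$ essentially by definition, since lifetimes start at release times and $s_j$ is the smallest of these.

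For (i), I would argue by contradiction: suppose the union has a gap, so there is a time $\tau < t$ that lies strictly after some lifetime but strictly before another, with no job of $I_j$ active at $\tau$. Pick a job $\ell \in I_j$ whose lifetime starts after $\tau$, i.e.\ $r_\ell > \tau$, chosen with $r_\ell$ as small as possible among such jobs; since $r_\ell > s_j$, the job $\ell$ was added to $I_j$ because it interferes (transitively) with some job, and in particular there is a chain of interferences linking $\ell$ back to a job released at or before $\tau$. The key point is that interference requires temporal overlap: if $\ell$ interferes with $j'$, then at some time $s \le t$ both are "present" in the relevant sense ($j'$ active, $\ell$ processed). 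I would use this to locate a job of $I_j$ that is active at a time in the supposed gap, or straddling it, contradicting the choice of $\tau$. Concretely, walk along the interference chain from $\ell$ toward the early-released job and observe that consecutive links force overlapping lifetimes (or an active-at-$s$ witness), so the lifetimes cannot all jump across $\tau$.

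For (ii), the right endpoint: since $j \in X_t \sm X^*_t$, job $j$ itself is active at time $t$ under SETF (it belongs to $X_t$), so $\min\{C_j, t\} = t$ and $t$ is in the union because it is in $j$'s own lifetime. That is immediate. Combined with (i) and the left-endpoint observation, the union is exactly $[s_j, t]$.

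The main obstacle is making the no-gap argument (i) airtight — specifically, pinning down exactly how a single interference link constrains the two jobs' lifetimes. The definition says $\ell$ interferes with $j'$ if at some $s \le t$, job $j'$ is active and $\ell$ is processed. "Processed at time $s$" means $\ell$ is active at $s$ too (one can only process active jobs), so $s$ lies in both lifetimes, giving overlap at the single point $s$ at least — enough to rule out a gap strictly between the two lifetimes passing through a chain. I would also need the minor bookkeeping that the transitive-closure construction terminates and is well-defined (stated in the text), and that every job in $I_j$ has a lifetime that is a genuine subinterval of $[s_j, t]$. Care is needed because lifetimes can belong to jobs SETF completed before $t$, but that only affects the right endpoint of those particular lifetimes, never creating a gap once the chain structure is exploited.
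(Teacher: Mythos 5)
Your proposal is correct and rests on the same two pillars as the paper's proof: the right endpoint is $t$ because $j \in X_t$ is active at time $t$ (and lifetimes are truncated at $t$ by definition), and interval-ness of the union follows from the fact that an interference between two jobs happens at a time $s \le t$ lying in both lifetimes, so interfering jobs have overlapping lifetimes. Where you differ is in the packaging of the no-gap step: the paper runs a direct induction over the construction of $I_j$ --- each newly added job interferes with a job already in the set, so its lifetime overlaps the current union, which therefore stays an interval --- whereas you argue by contradiction from a gap at some $\tau$ and walk an interference chain across it. Your route works, but note one imprecision you would need to patch: the chains produced by the construction lead back to the root job $j$, not necessarily ``back to a job released at or before $\tau$'' as you assert; since $j$'s lifetime contains $t>\tau$ and avoids $\tau$, job $j$ itself sits entirely to the right of the gap. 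The clean repair is to take a job of $I_j$ whose lifetime lies left of $\tau$ (one exists because $s_j<\tau$), concatenate its chain to $j$ with the chain from $\ell$ (or just use $j$ directly), and observe that consecutive links have overlapping lifetimes, each of which avoids $\tau$ and hence lies entirely on one side of it --- so the whole chain stays on one side, a contradiction. After this fix the two arguments are equivalent in substance; the paper's induction simply avoids the chain bookkeeping, which is why it is the shorter write-up.
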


\begin{proof}
One job can interfere with another only if their lifetimes overlap.
By induction, the union of the lifetimes of jobs in~$I_j$ is an
interval.  The right endpoint of the interval is at most~$t$ by
definition, and is at least~$t$ because job~$j$ is active at
time~$t$.  The left endpoint of the interval is the earliest time at
which a job of~$I_j$ is active, which is~$\min_{\ell \in I_j}
r_{\ell}$.
\end{proof}

Conversely, every job processed in the interval corresponding to an
interference set belongs to that set.

\begin{proposition}\label{prop:inf2}
Let $j \in \{ 1,2,\ldots,k\}$ be a job of $X_t \sm X^*_t$ and $[s_j,t]$
the union of the lifetimes of the jobs in $j$'s interference set~$I_j$.
Every job processed at some time $s \in [s_j,t]$ belongs to~$I_j$.
\end{proposition}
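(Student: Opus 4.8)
The plan is to read this as the converse of Proposition~\ref{prop:inf1} and to close the argument using the fact that $I_j$ is a transitive closure of the interference relation. Fix a job~$\ell$ that SETF processes at some time $s\in[s_j,t]$. First I would locate a job of $I_j$ that is active at time~$s$: by Proposition~\ref{prop:inf1} the interval $[s_j,t]$ is precisely the union of the lifetimes of the jobs in~$I_j$, so some $j'\in I_j$ has $s$ in its lifetime, i.e.\ $j'$ is active at time~$s$ (and $s\le t$). If $\ell=j'$ there is nothing to prove, so assume $\ell\ne j'$.

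Next I would verify that $\ell$ interferes with~$j'$ in the sense of the definition, with witness time~$s$. At time~$s$ the machine is running~$\ell$ while~$j'$ is active; if SETF happens to be processing~$j'$ as well at that instant (which occurs when several least-elapsed-time jobs are tied), then $\ell$ is processed in parallel with~$j'$, and otherwise the machine is devoting its processing to~$\ell$ (possibly together with other jobs) instead of~$j'$. In either case the condition ``$\ell$ is processed in parallel with or instead of~$j$'' at a time $\le t$ is met with $j=j'$. Finally I would invoke closure: $I_j$ is the transitive closure of the interference relation seeded at~$\{j\}$, equivalently the while loop defining it halts only when no job outside~$I_j$ interferes with any job of~$I_j$. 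Since $j'\in I_j$ and $\ell$ interferes with~$j'$, we conclude $\ell\in I_j$.

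I do not expect a genuine obstacle here; the one point requiring a little care is the behaviour at the endpoints of $[s_j,t]$ (and on the measure-zero set of times that are junctions between consecutive lifetimes), where ``processed at time~$s$'' and ``active at time~$s$'' must be read with a fixed one-sided convention. At $s=t$ this is painless---one may simply take $j'=j$, which is active at time~$t$ by hypothesis---and near $s=s_j$ the job attaining the earliest release time plays the role of~$j'$. So the substance of the proposition is just the bookkeeping above, mirroring the proof of Proposition~\ref{prop:inf1} in the opposite direction.
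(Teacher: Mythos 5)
Your proposal is correct and follows essentially the same argument as the paper: locate a job of $I_j$ active at time $s$ (using that $[s_j,t]$ is the union of the lifetimes of the jobs in $I_j$), and conclude that $\ell$ either equals that job or interferes with it and hence lies in $I_j$ by closure. The extra remarks about endpoint conventions are harmless but not needed beyond what the paper's proof already implies.
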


\begin{proof}
Suppose job~$\ell$ is processed at some time $s \in [s_j,t]$.
Since~$[s_j,t]$ is the union of the lifetimes of the jobs in~$I_j$,
$I_j$ contains a job~$i$ that is active at time~$s$.  
If $i \neq \ell$, then job~$\ell$ 
interferes with~$i$ and hence also belongs to~$I_j$.
\end{proof}

The next proposition helps implement the third step of the intuition
outlined in Section~\ref{ss:setf}.
\begin{proposition}\label{prop:inf3}
Let $j \in \{ 1,2,\ldots,k\}$ be a job of $X_t \sm X^*_t$.
Let~$w_{\ell}$ denote the elapsed time of a job~${\ell}$ under
  SETF by time~$t$.  Then $w_{\ell} \le w_j$ for every job $\ell$ in
  $j$'s interference set $I_j$.
\end{proposition}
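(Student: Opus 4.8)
The plan is to argue entirely in terms of the \emph{frontier} of SETF at a given instant~$s$: the least elapsed time among the jobs active at time~$s$, which (by definition of SETF) is exactly the elapsed time of whatever job(s) the machine is processing at time~$s$. Two elementary dynamics of SETF drive everything. First, if a job~$a$ is processed at a time~$s$ at which another job~$b$ is active, then $w_a(s)\le w_b(s)$, since otherwise $b$ has strictly smaller elapsed time and would be processed instead of~$a$. Second, this comparison is stable: if $w_a(s)\le w_b(s)$ and $b$ stays active on~$[s,s']$, then $w_a(s')\le w_b(s')$, because as long as $w_a<w_b$ SETF never processes~$b$ (some active job has strictly smaller elapsed time), and once the two levels meet they rise together; continuity covers the meeting instant and the instant~$a$ completes. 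Call this the monotonicity principle.

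Next I would use that $j\in X_t$, so $j$ is active throughout $[r_j,t]$. Hence at every time $\tau\in[r_j,t]$ the frontier is at most $w_j(\tau)\le w_j(t)$; that is, on $[r_j,t]$ SETF only ever processes jobs whose current elapsed time is at most~$w_j(t)$. Consequently no job's elapsed time can cross the level~$w_j(t)$ from below during $[r_j,t]$: at the crossing instant that job would be processed with elapsed time exceeding $w_j(t)$, contradicting the frontier bound. In particular, any job~$\ell$ with $w_\ell(t)>w_j(t)$ must already have $w_\ell(r_j)>w_j(t)$, must lie dormant on all of $[r_j,t]$, and must have accumulated this excess strictly before~$r_j$, at a moment when the frontier \emph{itself} exceeded $w_j(t)$, i.e.\ when every active job had elapsed time more than~$w_j(t)$.

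Now I would rule out such an~$\ell$ inside~$I_j$. Since $\ell\in I_j$, the construction of~$I_j$ furnishes an interference chain $\ell=\ell_0,\ell_1,\ldots,\ell_q=j$, where each~$\ell_i$ is processed at some time $\le t$ at which $\ell_{i+1}$ is active. I would prove, by induction on the order in which jobs enter~$I_j$, that $w_{\ell_i}(t)\le w_j(t)$ for all~$i$ (the statement is vacuous for~$j$ itself). For the inductive step, a new member~$\ell$ enters because it interferes with an earlier member~$i$ at some time~$s\le t$, and the inductive hypothesis gives $w_i(t)\le w_j(t)$. The first dynamic gives $w_\ell(s)\le w_i(s)\le w_i(t)\le w_j(t)$, and the monotonicity principle gives $w_\ell(\tau)\le w_i(\tau)$ for all $\tau\le\min\{C_i,t\}$. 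If $i$ is still active at time~$t$ this already yields $w_\ell(t)\le w_i(t)\le w_j(t)$; if $i$ completes at a time in $[r_j,t]$, then $w_\ell$ is $\le w_j(t)$ at that time and, by the no-crossing observation above (we are inside $[r_j,t]$), it stays $\le w_j(t)$ through time~$t$.

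The one remaining case is the main obstacle: $i$ completes strictly before~$r_j$. Then $p_i=w_i(t)\le w_j(t)$, and a priori~$\ell$ could be processed again after~$C_i$ and push its elapsed time past~$w_j(t)$ while~$j$ has not yet been released. To exclude this I would combine the frontier argument with Propositions~\ref{prop:inf1} and~\ref{prop:inf2}. If $w_\ell$ were to cross~$w_j(t)$ at some instant $\tau<r_j$, then first this cannot happen at a time $\le C_i$ (there $\ell$ would be processed with elapsed time $>w_j(t)\ge p_i$, exceeding the processing time of the then-active job~$i$), so $\tau\in(C_i,r_j)\subseteq[s_j,t]$; at~$\tau^+$ the frontier exceeds~$w_j(t)$, so every active job there has elapsed time $>w_j(t)$, and by Proposition~\ref{prop:inf1} one of these is a job of~$I_j$ (Proposition~\ref{prop:inf2} guarantees we never leave~$I_j$) whose elapsed time therefore already exceeds~$w_j(t)$ at a time strictly before~$r_j$ — contradicting, when one chooses~$\ell$ to be the job of~$I_j$ crossing level~$w_j(t)$ \emph{earliest}, the minimality of that crossing time. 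Pinning down this last contradiction cleanly — in particular checking that the extremal choice is compatible with the inductive bookkeeping, and handling arrivals at the crossing instant — is the only delicate point; the rest is routine.
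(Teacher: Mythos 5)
The gap is exactly at the point you flag: the case in which the interfering job $i$ completes strictly before $r_j$. Your intended contradiction does not materialize. Suppose $\ell$ is chosen as the member of $I_j$ whose elapsed time crosses the level $w_j(t)$ earliest, at some time $\tau < r_j$. Just after $\tau$ every active job does have elapsed time exceeding $w_j(t)$, and Proposition~\ref{prop:inf1} does guarantee that some job of $I_j$ is active then; but that job can perfectly well be $\ell$ itself (it is active and being processed at $\tau^{+}$), or a job tied with $\ell$ at the frontier that crosses the level at the very same instant (SETF splits processing among ties, so several jobs can cross together). In neither case do you produce a member of $I_j$ whose crossing time is \emph{strictly} earlier than $\tau$, so minimality is not violated and no contradiction follows. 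Nothing in your frontier/no-crossing machinery forbids the frontier from exceeding $w_j(t)$ during $(C_i,r_j)$: the no-crossing observation relies on $j$ being active, and $j$ has not yet been released, while the only other cap you have in hand, job $i$, has already completed.

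What actually closes this case---and what the paper's proof uses---is to induct not with a single interfering partner but with the whole chain of already-added jobs $j_2,\ldots,j_p$ with $j_p=j$ that links the new job to $j$: by the argument of Proposition~\ref{prop:inf1} applied to this chain, the union of their lifetimes is an interval $[s,t]$ whose right endpoint is $t$ (because $j\in X_t$). Every moment at which the newly added job is processed after its interference moment and before $t$ lies in $[s,t]$, and at each such moment some chain member is active with current elapsed time at most its elapsed time at $t$, which is at most $w_j$ by the inductive hypothesis; SETF then caps the new job's elapsed time at $w_j$ through its last processing before $t$. This is precisely the control over the window $(C_i,r_j)$ that your argument lacks: during that window some \emph{other} previously added member of $I_j$, with final elapsed time at most $w_j$, is active and prevents $\ell$ from riding the frontier above $w_j(t)$. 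Your first dynamic, the monotonicity principle, the no-crossing observation on $[r_j,t]$, and the two easy cases are all fine, but repairing the remaining case essentially requires importing this chain-covering step, so the proposal as it stands is incomplete.
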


\begin{proof}
We proceed by induction on the additions to the interference set.
Consider an iteration of the construction that adds a job~$j_1$
to $I_j$.  By construction, 
there is a sequence of already-added jobs $j_2,j_3,\ldots,j_p$ such that
$j_p = j$ and $j_i$ interferes with
$j_{i+1}$ for each $i=1,2,\ldots,p-1$.  (Assume that~$p > 1$; otherwise
we're in the base case where $j_1 = j$ and there's nothing to prove.)
As in Proposition~\ref{prop:inf1}, the union of the lifetimes of
the jobs $\{j_2,j_3,\ldots,j_p\}$ forms an interval $[s,t]$; the right
endpoint is~$t$ because $j_p = j$ is active at time~$t$.  By
induction, $w_{j_i} \le w_j$ for every $i=2,3,\ldots,p$.  
Thus, whenever~$j_1$ is processed in the interval $[s,t]$, there is an
active job with elapsed time at most~$w_j$.  By virtue of being
processed by SETF, the elapsed time of~$j_1$ at any such point in time
is also at most~$w_j$.  The job~$j_1$ must be processed at least once
during the interval $[s,t]$ (as the job interferes with $j_2$),
so its elapsed time by time~$t$ is at most~$w_j$.
\end{proof}

\subsection{Proof of Lemma~\ref{l:kp00}: The Main Argument}

We are now prepared to implement formally the intuition outlined in
Section~\ref{ss:setf}.  

Fix a job $j \in X_t \sm X^*_t$; recall that
$X_t \sm X^*_t = \{1,2,\ldots,k\}$, with jobs indexed in nonincreasing
order of release time.  Let~$I_j$ denote the corresponding
interference set and $[s_j,t]$ the corresponding interval in
Proposition~\ref{prop:inf1}.  As in Proposition~\ref{prop:inf3},
let~$w_i$ denote the elapsed time of a job~$i$ under SETF at time~$t$.
All processing of the jobs in~$I_j$ (by SETF or SRPT) up to time~$t$
occurs in this interval, and all processing by SETF in this interval
is of jobs in~$I_j$ (Proposition~\ref{prop:inf2}).  Thus, the
value~$w_i$ is precisely the amount of time devoted by SETF to the
job~$i$ in the interval~$[s_j,t]$.

During the interval~$[s_j,t]$, the SRPT algorithm (with a unit-speed
machine) spends at most $t-s_j$ time processing jobs, and in
particular at most $t-s_j$ time processing jobs of~$I_j$.  Meanwhile,
the SETF algorithm works continually over the interval $[s_j,t]$; at
all times $s \in [s_j,t]$ there is at least one active job
(Proposition~\ref{prop:inf1}), and the SETF algorithm never idles with
an active job.  Thus SETF (with a $(1+\eps)$-speed machine)
processes~$(1+\eps)(t-s_j)$ units worth of jobs in this interval, and
all of this work is devoted to jobs of~$I_j$
(Proposition~\ref{prop:inf2}).

Now group the jobs of~$I_j$ into three categories:
\begin{enumerate}

\item Jobs $i \in I_j$ that belong to $X^*_t$ (i.e., SRPT has not
  completed~$i$ by time~$t$).

\item Jobs $i \in I_j$ that belong to $X_t$ but not $X^*_t$ (i.e.,
  SETF has not completed~$i$ by time~$t$, but SRPT has).

\item Jobs $i \in I_j$ that belong to neither~$X_t$ nor $X^*_t$ (i.e.,
both SETF and SRPT have completed~$i$ by time~$t$).

\end{enumerate}
The SRPT algorithm spends at least as much time as SETF in the
interval~$[s_j,t]$ processing category-2 jobs (as the former completes
them and the latter does not), as per the first step of the intuition
in Section~\ref{ss:setf}.
Both algorithms spend exactly the same
amount of time on category-3 jobs in this interval (namely, the sum of
the processing times of these jobs).  We can therefore conclude that
the excess time~$\eps(t-s_j)$ spent by the SETF algorithm (beyond that
spent by SRPT) is devoted entirely to category-1 jobs---the jobs
of~$X^*_t$ (cf., the second step of the outline in
Section~\ref{ss:setf}).  We summarize our progress so far in a
proposition.
\begin{proposition}\label{prop:inf4}
For every $j=1,2,\ldots,k$,
\[
\sum_{i \in I_j \cap X^*_t} w_i \ge \eps \cdot (t-s_j).
\]
\end{proposition}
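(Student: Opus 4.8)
The plan is to prove Proposition~\ref{prop:inf4} by a careful accounting of the total amount of job processing performed by SETF and by SRPT during the interval $[s_j,t]$, split across the three categories of jobs in $I_j$ introduced above. Fix $j$. The two basic quantitative inputs are: (i) by Propositions~\ref{prop:inf1} and~\ref{prop:inf2}, SETF never idles during $[s_j,t]$ and every unit of processing it does in this interval is spent on a job of $I_j$, so with a $(1+\eps)$-speed machine it performs exactly $(1+\eps)(t-s_j)$ units of work on jobs of $I_j$ during $[s_j,t]$; and (ii) a unit-speed machine performs at most $t-s_j$ units of work during $[s_j,t]$, so SRPT devotes at most $t-s_j$ units of processing to jobs of $I_j$ in this interval. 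I will also use the observation (already noted in the text, and immediate because every job of $I_j$ has release time at least $s_j$ and we only count processing up to time~$t$) that \emph{all} processing of an $I_j$-job by either algorithm up to time~$t$ happens inside $[s_j,t]$; in particular $w_i$ equals the total work SETF devotes to job~$i$ within $[s_j,t]$, and an $I_j$-job completed by SRPT (resp.\ SETF) by time~$t$ has its full processing requirement $p_i$ carried out inside $[s_j,t]$.

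For $C \in \{1,2,3\}$ I would write $S_C$ (resp.\ $R_C$) for the total work performed by SETF (resp.\ SRPT) on category-$C$ jobs of $I_j$ during $[s_j,t]$; since the three categories partition $I_j$, input~(i) gives $S_1+S_2+S_3=(1+\eps)(t-s_j)$, input~(ii) gives $R_1+R_2+R_3 \le t-s_j$, and by definition of category~1 we have $S_1=\sum_{i \in I_j \cap X^*_t} w_i$. The heart of the argument is then the two category-wise comparisons sketched before the proposition statement: every category-3 job is completed by both algorithms by time~$t$, so $S_3=R_3$ (each equals the sum of the processing times of the category-3 jobs); and every category-2 job is completed by SRPT but not by SETF, so SRPT devotes its full processing time to such a job while SETF devotes at most that much, giving $S_2 \le R_2$.

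Combining these facts finishes the proof:
\[
\sum_{i \in I_j \cap X^*_t} w_i \;=\; S_1 \;=\; (1+\eps)(t-s_j) - S_2 - S_3 \;\ge\; (1+\eps)(t-s_j) - R_2 - R_3 \;\ge\; (1+\eps)(t-s_j) - (t-s_j) \;=\; \eps(t-s_j),
\]
where the penultimate inequality also discards the nonnegative term $R_1$. I expect the only delicate points to be bookkeeping ones rather than conceptual ones: being precise that $w_i$ measures SETF's processing of job~$i$ \emph{within} $[s_j,t]$ (so the category sums add up correctly and the equality $S_1=\sum_{i\in I_j\cap X^*_t} w_i$ is exact), keeping the speed factor $1+\eps$ attached to SETF's work but not SRPT's, and explicitly invoking Propositions~\ref{prop:inf1}--\ref{prop:inf2} to justify that SETF's in-interval work is entirely on $I_j$-jobs and that it works continually throughout the interval. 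Note that Proposition~\ref{prop:inf3} is not needed here; it is used later to bound the number $k$ of jobs.
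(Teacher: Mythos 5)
Your proof is correct and follows essentially the same route as the paper: the text preceding Proposition~\ref{prop:inf4} argues exactly this way, using Propositions~\ref{prop:inf1} and~\ref{prop:inf2} to get $(1+\eps)(t-s_j)$ units of SETF work all on $I_j$ versus at most $t-s_j$ for SRPT, the same three-category split with ``category-2: SRPT $\ge$ SETF'' and ``category-3: equal,'' and the conclusion that the excess $\eps(t-s_j)$ lands on the jobs of $I_j \cap X^*_t$. Your explicit $S_C$/$R_C$ bookkeeping (and the observation that Proposition~\ref{prop:inf3} is not needed here) is just a cleaner formalization of the paper's prose, not a different argument.
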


The sum in Proposition~\ref{prop:inf4} is, at least, over the
jobs~$\{1,2,\ldots,j\}$.
\begin{proposition}\label{prop:inf5}
For every $j=1,2,\ldots,k$, the interference set~$I_j$ includes the
jobs $\{1,2,\ldots,j\}$.
\end{proposition}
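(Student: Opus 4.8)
The plan is to fix an arbitrary $j \in \{1,\ldots,k\}$ and show directly that every $i \in \{1,\ldots,j\}$ lies in $I_j$; no induction is needed. The case $i=j$ is immediate, since $I_j$ is initialized to $\{j\}$, so I will assume $i<j$.

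First I would locate job~$i$'s lifetime relative to the interval $[s_j,t]$ from Proposition~\ref{prop:inf1}. Because the jobs of $X_t\sm X^*_t$ are indexed in nonincreasing order of release time and $i\le j$, we have $r_i\ge r_j$. Since $j\in I_j$ and $s_j$ is the minimum release time over $I_j$, this gives $s_j\le r_j\le r_i$; and since $i\in X_t$ is active at time~$t$, we have $r_i\le t$. Hence $r_i\in[s_j,t]$.

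The key observation is then that SETF processes job~$i$ at (the instant of, or just after) its release time~$r_i$: at time~$r_i$ the elapsed time of job~$i$ is~$0$, the smallest possible value, so job~$i$ is among the minimum-elapsed-time active jobs and SETF devotes processing power to it (recall SETF never idles while a job is active, and splits power equally among the minimum-elapsed-time jobs). Thus job~$i$ is processed at a time $s=r_i\in[s_j,t]$, and Proposition~\ref{prop:inf2} forces $i\in I_j$. Equivalently, one can avoid Proposition~\ref{prop:inf2}: job~$j$ is active throughout $[r_j,t]\ni r_i$, so processing~$i$ at time~$r_i$ means that~$i$ interferes with~$j$, and since $j\in I_j$ the construction of the interference set adds~$i$.

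I expect the only delicate point to be the handling of the release instant under SETF's tie-breaking and parallel-processing conventions, i.e., making precise the sense in which ``job~$i$ is processed at time~$r_i$'' that matches how Propositions~\ref{prop:inf1} and~\ref{prop:inf2} are phrased. This requires only the same level of care already exercised in the proof of Proposition~\ref{prop:inf1}, and the boundary sub-case $r_i=t$ (job~$i$ released exactly at time~$t$) is consistent with those conventions, so it causes no genuine difficulty.
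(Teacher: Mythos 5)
Your proposal is correct and matches the paper's own argument: the essential point in both is that a job released while job~$j$ is still active is processed by SETF immediately upon release (its elapsed time~$0$ being minimal), and hence interferes with~$j$ directly; your second variant, bypassing Proposition~\ref{prop:inf2}, is exactly the paper's one-step proof, and the detour through Proposition~\ref{prop:inf2} is only a cosmetic difference.
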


\begin{proof}
Recall that the jobs $\{1,2,\ldots,k\}$ of $X_t \sm X^*_t$ are sorted
in nonincreasing order of release time.  Each job~$i=1,2,\ldots,j-1$
is released after job~$j$ and before job~$j$ completes (which is at
time~$t$ or later), and interferes with~$j$ at the time of its
release (as SETF begins processing it immediately).
\end{proof}

Combining Propositions~\ref{prop:inf4} and~\ref{prop:inf5},
we can associate unfinished work at time~$t$ for SETF with that of
SRPT:
\begin{corollary}\label{cor:inf}
For every $j=1,2,\ldots,k$,
\[
\sum_{i \in I_j \cap X^*_t} w_i \ge \eps \cdot \sum_{\ell=1}^j w_{\ell}.
\]
\end{corollary}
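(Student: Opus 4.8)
The plan is to route both propositions through the common quantity $t-s_j$. Since Proposition~\ref{prop:inf4} already supplies $\sum_{i \in I_j \cap X^*_t} w_i \ge \eps\,(t-s_j)$, the corollary will follow as soon as we establish the complementary bound $\sum_{\ell=1}^{j} w_\ell \le t-s_j$: multiplying the latter by $\eps$ and chaining with Proposition~\ref{prop:inf4} gives $\sum_{i \in I_j \cap X^*_t} w_i \ge \eps\,(t-s_j) \ge \eps \sum_{\ell=1}^{j} w_\ell$, which is the claim.

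To prove $\sum_{\ell=1}^{j} w_\ell \le t-s_j$, I would inspect the SRPT schedule on the interval $[s_j,t]$. Fix $\ell \in \{1,\ldots,j\}$. By Proposition~\ref{prop:inf5} the job~$\ell$ belongs to the interference set~$I_j$, so (by the identification of~$s_j$ in Proposition~\ref{prop:inf1}) its release time satisfies $r_\ell \ge s_j$; consequently SRPT does all of its processing of~$\ell$ within $[r_\ell,t] \subseteq [s_j,t]$. Since $\ell \in X_t \sm X^*_t$, SRPT has completed~$\ell$ by time~$t$ while SETF has not, so SRPT devotes at least~$w_\ell$ units of processing to~$\ell$. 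The jobs $1,\ldots,j$ are distinct, so SRPT performs at least $\sum_{\ell=1}^{j} w_\ell$ units of work inside $[s_j,t]$; running a unit-speed machine, it performs at most $t-s_j$ units of work over an interval of that length, which is the desired inequality.

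I do not expect a serious obstacle here: the corollary is essentially accounting built on Propositions~\ref{prop:inf4} and~\ref{prop:inf5}. The one point requiring care is that every estimate must be pinned to the \emph{same} interval $[s_j,t]$ attached to the \emph{same} job~$j$; it is exactly Proposition~\ref{prop:inf5}, which forces $\{1,\ldots,j\} \subseteq I_j$, together with the definition of~$s_j$, that keeps all of SRPT's processing of jobs $1,\ldots,j$ confined to that interval and lets the unit-speed capacity bound bite.
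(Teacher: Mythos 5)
Your proposal is correct and takes essentially the paper's (implicit) route: the corollary is exactly Proposition~\ref{prop:inf4} chained with the bridging bound $\sum_{\ell=1}^{j} w_\ell \le t-s_j$, which rests on Proposition~\ref{prop:inf5} placing jobs $1,\ldots,j$ inside $I_j$ so that all their processing is confined to $[s_j,t]$. Your justification of that bound via SRPT's unit-speed capacity on $[s_j,t]$ (using that SRPT completes each such job and does at least $w_\ell$ work on it) is a harmless variant of the paper's remark that each $w_\ell$ is time SETF devotes to job $\ell$ within the interval $[s_j,t]$ of length $t-s_j$; either way the chaining is valid.
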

For example, taking~$j=1$, we can identify $\eps w_1$ units of time
that SETF spends processing the jobs of~$I_1 \cap X^*_t$ before
time~$t$.  Similarly, taking~$j=2$, we can identify~$\eps w_2$
different units of time that SETF spends processing the jobs of
$I_2 \cap X^*_t$: Corollary~\ref{cor:inf} ensures that the total
amount of time so spent is at least $\eps w_1 + \eps w_2$, with at
most $\eps w_1$ of it already accounted for in the first step.
Continuing with~$j=3,4,\ldots,k$, the end result of this process is a
collection $\{ \alpha(j,i) \}$ of nonnegative ``charges'' from
jobs~$j$ of $X_t \sm X^*_t$ to jobs~$i$ of $X^*_t$ that satisfies the
following properties:
\begin{enumerate}

\item For every $j=1,2,\ldots,k$, $\sum_{i \in X^*_t} \alpha(j,i) =
  \eps w_j$.

\item For every $i \in X^*_t$, $\sum_{j=1}^k \alpha(j,i) \le w_i$.

\item $\alpha(j,i) > 0$ only if $i \in I_j \cap X^*_t$.

\end{enumerate}
Combining the third property with Proposition~\ref{prop:inf3}:
\begin{equation}\label{eq:inf}
w_i \le w_j \text{ whenever } \alpha(j,i) > 0.
\end{equation}

We can extract from the $\alpha(j,i)$'s a type of network flow in a
bipartite graph with vertex sets $X_t \sm X^*_t$ and $X^*_t$.
Precisely, define the flow~$f^+_{ji}$ outgoing from~$j \in X_t \sm
X^*_t$ to $i \in X^*_t$ by
\[
f^+_{ji} = \frac{\alpha(j,i)}{w_j}
\]
and the flow~$f^-_{ji}$ incoming to~$i$ from~$j$ by
\[
f^-_{ji} = \frac{\alpha(j,i)}{w_i}.
\]
If we think of each vertex~$h$ as having a capacity of~$w_h$, then
$f^+_{ji}$ (respectively, $f^-_{ji}$) represents the fraction of~$j$'s
capacity (respectively, $i$'s capacity) consumed by the charge
$\alpha(j,i)$.  Property~\eqref{eq:inf} implies that the flow is
expansive, meaning that
\[
f^+_{ji} \le f^-_{ji}
\]
for every $j$ and $i$.

The first property of the $\alpha(j,i)$'s implies that there
are~$\eps$ units of flow outgoing from each~$j \in X_t \sm X^*_t$, for
a total of $\eps \cdot |X_t \sm X^*_t|$.  The second property implies
that there is at most one unit of flow incoming to each~$i \in X^*_t$,
for a total of at most~$|X^*_t|$.  Because the flow is expansive,
the total amount of flow incoming to~$X^*_t$ is at least 
that outgoing from~$X_t \sm X^*_t$, and so 
\[
|X^*_t| \ge \eps \cdot |X_t \sm X^*_t|.
\]
This completes the proof of Lemma~\ref{l:kp00}:
\[
|X_t| \le |X^*_t| + |X_t \sm X^*_t| \le |X^*_t| \cdot \left( 1 +
  \frac{1}{\eps} \right).
\]

\section{Loosely Competitive Algorithms}\label{s:lc}

An online algorithm with a good resource augmentation guarantee is
usually ``loosely competitive'' with the offline optimal algorithm,
which roughly means that, for every input, its performance is near-optimal
for most resource levels (cf., Figure~\ref{f:ra}(b)).  We illustrate
the idea using the online paging problem from Section~\ref{s:paging};
Exercise~\ref{exer:friedman} outlines an analogous result in the
selfish routing model of Section~\ref{s:sr}.

There is simple and accurate intuition behind the main result of this
section.  Consider a page request sequence~$z$ and a cache size~$k$.
Suppose the number of page faults incurred by the LRU algorithm is
roughly the same---within a factor of~2, say---with the cache
sizes~$k$ and~$2k$.  Theorem~\ref{t:lru_ra}, with $2k$ and $k$ playing
the roles of $k$ and $h$, respectively, then immediately implies that
the number of page faults incurred by the LRU algorithm with cache
size~$k$ is at most a constant (roughly~4) times that incurred by the
offline optimal algorithm with the same cache size.  In other words,
in this case the LRU algorithm is competitive in the traditional sense
(Definition~\ref{d:cr}).  Otherwise, the performance of the LRU
algorithm improves rapidly as the cache size is expanded from~$k$
to~$2k$.  But because there is a bound on the maximum fluctuation of
LRU's performance (between no page faults and faulting every time
step), its performance can only change rapidly for a bounded number of
different cache sizes.

Here is the precise statement, followed by discussion and a proof.
\begin{theorem}[\citet{Y02}]\label{t:young}
For every $\epsilon,\delta > 0$ and positive integer $n$, for every
page request sequence $z$, for all but a $\delta$ fraction of the
cache sizes~$k$ in $\{1,2,\ldots,n\}$, the LRU algorithm satisfies either:
\begin{enumerate}

\item $\perf(LRU,k,z) = O(\tfrac{1}{\delta} \log
\tfrac{1}{\epsilon}) \cdot \perf(FIF,k,z)$; or

\item $\perf(LRU,k,z) \le \epsilon \cdot |z|$.

\end{enumerate}
\end{theorem}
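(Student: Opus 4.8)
The proof rests on two facts. First, Theorem~\ref{t:lru_ra}. Second, the elementary observation that $\perf(LRU,k,z)$ is nonincreasing in~$k$: after any prefix of~$z$, LRU's size-$k$ cache holds precisely the $k$ most recently requested distinct pages, hence is contained in LRU's size-$(k+1)$ cache, so every page fault of LRU with cache size~$k+1$ is also a fault with cache size~$k$. We may assume $0<\delta,\eps<1$ (otherwise one of the two conclusions holds trivially for every~$k$), and we may assume $n>\tfrac{c}{\delta}\log\tfrac1\eps$ for a suitable constant~$c$: for smaller~$n$, the classical bound $\perf(LRU,k,z)\le k\cdot\perf(FIF,k,z)+k$ (the $h=k$ case of Theorem~\ref{t:lru_ra}), together with $k\le n$, already yields conclusion~1 when $\perf(FIF,k,z)\ge 1$, and conclusion~2 (indeed with no faults at all) when $\perf(FIF,k,z)=0$.

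The plan is to fix a ``zoom scale'' $s=\Theta(\delta n/\log\tfrac1\eps)$ and, for each cache size~$k$, compare $\perf(LRU,k,z)$ against $\perf(LRU,k+s,z)$. If $k+s\le n$ and $\perf(LRU,k,z)\le 2\,\perf(LRU,k+s,z)$, then Theorem~\ref{t:lru_ra} applied with cache sizes $k\le k+s$ (in the roles of $h\le k$ there) gives
\[
\perf(LRU,k,z)\;\le\;2\,\perf(LRU,k+s,z)\;\le\;\frac{2(k+s)}{s+1}\cdot\perf(FIF,k,z)
\]
up to the additive error term; since $k\le n$, the prefactor is $O(n/s)=O(\tfrac1\delta\log\tfrac1\eps)$, which is conclusion~1. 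And conclusion~2 holds for every~$k$ with $\perf(LRU,k,z)\le\eps|z|$. So it suffices to bound the number of \emph{exceptional} cache sizes: those~$k$ with $\perf(LRU,k,z)>\eps|z|$ for which, moreover, either $k>n-s$ or $\perf(LRU,k,z)>2\,\perf(LRU,k+s,z)$.

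To count the exceptional cache sizes, partition $\{1,\dots,n\}$ into the $s$ arithmetic progressions of common difference~$s$, and fix one of them. List the exceptional cache sizes it contains as $k_1<k_2<\cdots<k_m$; all of $k_1,\dots,k_{m-1}$ are then at most $n-s$, because a progression of common difference~$s$ has at most one element in $(n-s,n]$. Since consecutive elements of the progression differ by~$s$, we have $k_{i+1}\ge k_i+s$, and combining monotonicity with the failed halving test at~$k_i$ gives
\[
\perf(LRU,k_{i+1},z)\;\le\;\perf(LRU,k_i+s,z)\;<\;\tfrac12\,\perf(LRU,k_i,z)\qquad(1\le i\le m-1).
\]
Iterating, $\perf(LRU,k_m,z)<2^{-(m-1)}\perf(LRU,k_1,z)\le 2^{-(m-1)}|z|$ (LRU faults at most once per request), while $\perf(LRU,k_m,z)>\eps|z|$ since $k_m$ is exceptional; hence $2^{m-1}<1/\eps$, i.e.\ $m<1+\log_2\tfrac1\eps$. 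Summing over the $s$ progressions, the number of exceptional cache sizes is at most $s\,(1+\log_2\tfrac1\eps)$, which is at most $\delta n$ for the choice $s=\lfloor\delta n/(1+\log_2\tfrac1\eps)\rfloor$ (which also satisfies $s\ge 1$ in the regime $n>\tfrac c\delta\log\tfrac1\eps$). This proves the theorem.

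The main obstacle is the choice of~$s$, which is pulled two ways: Theorem~\ref{t:lru_ra} is informative only when~$k$ and~$k+s$ differ by a constant factor (wanting $s$ as large as $\Theta(n)$), whereas the counting argument needs $\{1,\dots,n\}$ to split into many progressions, each hosting only $\approx\log\tfrac1\eps$ ``rapid-improvement'' steps before LRU's fault count would already fall below~$\eps|z|$ (wanting $s$ small). Balancing the two forces $s\approx\delta n/\log\tfrac1\eps$, and this is precisely what converts the $O(1)$ ratio suggested by the $k$-versus-$2k$ intuition into the $O(\tfrac1\delta\log\tfrac1\eps)$ of conclusion~1. The only remaining nuisance will be the additive error term of Theorem~\ref{t:lru_ra}: its proof shows the term is $O(n)$, which is $O(\eps|z|)$ in the only regime where the statement has content (namely $|z|=\Omega(n/\eps)$), so it is harmless throughout.
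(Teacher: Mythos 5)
Your proof is correct and follows essentially the same route as the paper's: the same halving dichotomy at shift $s\approx \delta n/\log_2\tfrac{1}{\eps}$, with Theorem~\ref{t:lru_ra} handling the cache sizes that pass the test and monotonicity-plus-chaining showing that at most $\delta n$ sizes can fail it while still faulting more than $\eps|z|$ times. Your bookkeeping via residue classes modulo $s$, together with the explicit treatment of small $n$ and of the additive error term (which the paper simply ignores), is just a slightly more careful rendering of the paper's ``take every $b$th bad cache size'' step, not a genuinely different argument.
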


Thus, for every page request sequence~$z$, each cache
size~$k$ falls into one of three cases.
In the first case, the LRU algorithm with cache size~$k$ 
is competitive in the sense of Definition~\ref{d:cr}, with the number
of page faults incurred at most a constant (i.e., $O(\tfrac{1}{\delta}
\log \tfrac{1}{\epsilon})$) times the minimum possible.
In the second case, the LRU algorithm has a page fault rate of at
most~$\eps$, and thus has laudable performance in an absolute sense.
In the third case neither good event occurs, but
fortunately this happens for only a $\delta$ fraction of the possible
cache sizes.

The parameters $\delta$, $\epsilon$, and $n$ in
Theorem~\ref{t:young} are used in  the analysis only---no ``tuning'' of
the LRU algorithm is needed---and Theorem~\ref{t:young} holds
simultaneously for all choices of these parameters.  The larger the
fraction $\delta$ of bad cache sizes or the absolute performance bound
$\epsilon$ that can be tolerated, the better the relative performance
guarantee in the first case.

In effect, Theorem~\ref{t:young} shows that a resource augmentation
guarantee like Theorem~\ref{t:lru_ra}---an apples vs.\ oranges
comparison between an online algorithm with a big cache and an offline
algorithm with a small cache---has interesting implications for online
algorithms even compared with offline algorithms with the same cache
size.  This result dodges the lower bound on the competitive ratio of
the LRU algorithm (Section~\ref{ss:cr}) in two ways.  First,
Theorem~\ref{t:young} offers guarantees only for most choices of the
cache size~$k$; LRU might perform poorly for a few unlucky cache
sizes.  This is a reasonable relaxation, given that we don't expect
actual page request sequences to be adversarially tailored to the
choice of cache size.  Second, Theorem~\ref{t:young} does not insist
on good performance relative to the offline optimal algorithm---good
absolute performance (i.e., a very small page fault rate) is also
acceptable, as one would expect in a typical
application.\footnote{This may seem like an obvious point, but such
  appeals to good absolute performance are uncommon in the analysis of
  online algorithms.}

We proceed to the proof of Theorem~\ref{t:young}, which follows
closely the intuition laid out at the beginning of the section.

\begin{proof}
Fix a request sequence $z$ and values for the parameters $\delta$,
$\eps$, and $n$.
Let $b$ be a positive integer, to be chosen in due time.
The resource augmentation guarantee in
Theorem~\ref{t:lru_ra} states that, ignoring additive terms,
\begin{equation}\label{eq:young1}
\perf(LRU,k+b,z) \le \frac{k+b}{b+1} \cdot \perf(FIF,k,z),
\end{equation}
where~$k+b$ and~$k$ are playing the roles of~$k$ and~$h$ in
Theorem~\ref{t:lru_ra}, respectively.

There are two cases, depending on whether
\begin{equation}\label{eq:young2}
\perf(LRU,k+b,z) \ge \frac{1}{2} \cdot \perf(LRU,k,z)
\end{equation}
or
\[
\perf(LRU,k+b,z) < \frac{1}{2} \cdot \perf(LRU,k,z).
\]
Call a cache size~$k$ {\em good} or {\em bad} according to whether it
belongs to the first or second case, respectively.
For good cache sizes~$k$, chaining
together the inequalities~\eqref{eq:young1} and~\eqref{eq:young2} shows that
\begin{equation}\label{eq:young3}
\perf(LRU,k,z) \le 2 \cdot \frac{k+b}{b+1} \cdot
\perf(FIF,k,z),
\end{equation}
and hence LRU is competitive (with ratio $\tfrac{2(k+b)}{b+1}$) in the
sense of Definition~\ref{d:cr}.

Consider the set of bad cache sizes; for every such size, adding $b$
extra pages to the cache decreases the number of page faults
incurred by the LRU algorithm on~$z$ by at least a factor of~2.  If
there are at least $\ell$ bad cache sizes between 1 and $t-b$ for
some~$t$, then we can find $\ell/b$ bad cache sizes 
$k_1 < k_2 < \cdots < k_{\ell/b}$ in this
interval that are each at least $b$ apart (by taking every $b$th bad
cache size).\footnote{For clarity, we omit the appropriate ceilings
  and floors from fractions such as $\ell/b$.}
In this case, using that
$\perf(LRU,k,z)$ is nonincreasing in~$k$
(Exercise~\ref{exer:LRU_mono}), we have
\[
\perf(LRU,k_{i+1},z) < \frac{1}{2} \cdot \perf(LRU,k_{i},z)
\]
for each $i=1,2,\ldots,\ell/b$, where~$k_{(\ell/b)+1}$ should be
interpreted as $k_{\ell/b}+b \le t$.
Chaining all of these inequalities together yields
\[
\perf(LRU,t,z) < 2^{-\ell/b} \cdot \perf(LRU,1,z).
\]
Thus, once 
\begin{equation}\label{eq:young4}
\ell \ge b \cdot \log_2 \tfrac{1}{\epsilon},
\end{equation}
we have a page fault rate of at most~$\epsilon$:
\begin{equation}\label{eq:young5}
\perf(LRU,t,z) \le \epsilon \cdot |z|,
\end{equation}
where $|z|$ is the length of the request sequence $z$.

The time has come to instantiate the parameter~$b$.
Guided by our desire to have $\delta n$ bad cache sizes between~1 and
some number~$t$ force the condition that $\perf(LRU,k,z) \le \epsilon
|z|$ for all cache sizes $k \ge t$, we take $\ell = \delta n$.
The inequality~\eqref{eq:young4} then suggests taking $b = 
\delta n/\log_2 \tfrac{1}{\eps}$.

Cache sizes now fall into three categories:
\begin{enumerate}

\item Good cache sizes.  
By the inequality~\eqref{eq:young3} and our choice of~$b$,
\[
\perf(LRU,k,z) = O(\tfrac{1}{\delta} \log
\tfrac{1}{\epsilon}) \cdot \perf(FIF,k,z)
\] 
for every such cache size~$k$.

\item The smallest $\delta n$ bad cache sizes in $\{1,2,\ldots,n\}$.
  There is no performance guarantee for these cache sizes.

\item Bad cache sizes that are bigger than at least $\delta n$ other
  bad cache sizes.  Our choices of~$\ell$ and~$b$ ensure that the
inequality~\eqref{eq:young5} holds for such a cache size~$k$, with
\[
\perf(LRU,k,z) \le \eps |z|.
\] 

\end{enumerate}
Cache sizes in the first and third categories meet the first and
second guarantees, respectively, of Theorem~\ref{t:young}.  Cache
sizes in the second category constitute at most a $\delta$ fraction of
the possible cache sizes, so the proof is complete.
\end{proof}

\section{Notes}

Resource augmentation was first stressed as a first-order analysis
framework by \citet{KP00}, although there were compelling examples
much earlier (such as Theorem~\ref{t:lru_ra}, which was proved by
\citet{ST85}).  The phrase ``resource augmentation'' was proposed
shortly thereafter, by~\citet{PSTW02}.

The competitive analysis of online algorithms, including the model and
results in Section~\ref{s:paging},
was developed by
\citet{ST85}.  A good general reference for the topic is the book by
\citet{BE98}.  Theorem~\ref{t:fif} is due to \citet{B67}.
See~\citet[\S 2.4]{Y91} for empirical comparisons of the
FIF, LRU, and FIFO cache replacement policies on benchmark page
request sequences.

The selfish routing model described in Section~\ref{s:sr} was defined
by \citet{W52}.  Existence and uniqueness of equilibrium flows
(see footnote~6)
was proved by \citet{BMW56}; see also
\citet{rg}.  The price of anarchy was defined, in a different context,
by \citet{KP99}.  Theorem~\ref{thm:rt} and the extension in
Exercise~\ref{exer:mc} were proved by \citet{RT00}.  
The consequent loosely competitive bound
(Exercise~\ref{exer:friedman}) was proved by \citet{F04}.

\citet{PST03} is a good reference on the competitive analysis of
online scheduling algorithms; it includes a figure that inspired
Figure~\ref{f:ra}.  The optimality of SRPT (Exercise~\ref{exer:srpt})
was first proved by \citet{S68}.  Theorem~\ref{t:kp00} is by
\citet{KP00}, as is Exercise~\ref{exer:idle}.  One solution to
Exercise~\ref{exer:nonclairvoyant} appears in \citet{MPT94}.  There
are several more recent and sophisticated resource augmentation
guarantees for more complex scheduling problems, for example with
multiple machines, jobs with different priorities, and preemptions
replaced by a small number of rejections.
Good entry points to this literature include
\citet{IMP11}, \citet{AGK12}, and \citet{thang}.

The concept of a loosely competitive online algorithm 
is due to \citet{Y94} and 
Theorem~\ref{t:young} is from \citet{Y02}.

\section*{Acknowledgments}

I thank J\'er\'emy Barbay, Feder Fomin, Kirk Pruhs, Nguyen Kim Thang,
and Neal Young for helpful comments on a preliminary draft of this
chapter.

\section*{Exercises}

\begin{enumerate}

\item \label{exer:LRU_mono}
Prove that for every cache size $k \ge 1$ and every page sequence
$z$, 
\[ 
\perf(LRU,k+1,z) \le \perf(LRU,k,z).
\]

\item \label{exer:fifo}
Prove that Theorems~\ref{t:lru_ra} and~\ref{t:young} hold
also for the FIFO caching policy.

\item \label{exer:tight}
Prove a lower bound for all deterministic online algorithms that
matches the upper bound for LRU in
Theorem~\ref{t:lru_ra}.  That is, 
for every choice of $k$ and $h \le k$, every constant $\alpha <
\tfrac{k}{k-h+1}$, and every deterministic online
paging algorithm~$A$, there exist arbitrarily long sequences $z$ such
that $\perf(A,k,z) > \alpha \cdot \perf(FIF,h,z)$.

\item \label{exer:mc} Consider a {\em multicommodity} selfish routing
  network~$G=(V,E)$, with source vertices $s_1,s_2,\ldots,s_k$, sink
  vertices $t_1,t_2,\ldots,t_k$, and traffic rates
  $r_1,r_2,\ldots,r_k$.  A flow now routes, for each~$i=1,2,\ldots,k$,
  $r_i$ units of traffic from $s_i$ to $t_i$.  In an equilibrium
  flow~$f$, all traffic from $s_i$ to $t_i$ travels on $s_i$-$t_i$
  paths~$P$ with the minimum-possible length
  $\sum_{e \in P} c_e(f_e)$, where~$f_e$ denotes the total amount of
  traffic (across all source-sink pairs) using edge~$e$.

State and prove a generalization of Theorem~\ref{thm:rt} to
multicommodity selfish routing networks.

\item \label{exer:bicrit} Deduce Corollary~\ref{cor:bicrit} from
Theorem~\ref{thm:rt}.

\item \label{exer:friedman}
This problem derives a loosely
competitive-type bound from a resource augmentation bound in the
context of selfish routing (Section~\ref{s:sr}).
Let $\pi(G,r)$ denote the ratio of the costs of equilibrium
flows in~$G$ at the traffic rates $r$ and $r/2$.  By
Theorem~\ref{thm:rt}, the price of anarchy in the network $G$ at rate
$r$ is at most $\pi(G,r)$.
\begin{itemize}

\item [(a)] Use Theorem~\ref{thm:rt} to prove that, for every selfish
  routing network $G$ and traffic rate $r > 0$, and for at least an
  $\alpha$ fraction of the traffic rates $\hat{r}$ in $[r/2,r]$, the
  price of anarchy in $G$ at traffic rate $\hat{r}$ is at most
  $\beta \log \pi(G,r)$ (where $\alpha,\beta > 0$ are constants,
  independent of $G$ and $r$).

\item [(b)] Prove that for every constant $K > 0$, there exists a
  network $G$ with nonnegative, continuous, and nondecreasing edge
  cost functions and a traffic rate $r$ such that the price of anarchy
  in $G$ is at least $K$ for every traffic rate $\hat{r} \in [r/2,r]$.

\vspace{.25\baselineskip}
[Hint: use a network with many parallel links.]

\end{itemize}

\item \label{exer:srpt} Prove that the shortest remaining processing
  time (SRPT) algorithm is an optimal algorithm for the problem of
  scheduling jobs on   a single machine (with preemption allowed) to
  minimize the total flow time.

\item \label{exer:nonclairvoyant} Prove that for every constant
  $c > 0$, there is no non-clairvoyant deterministic online algorithm
  that always produces a schedule with total flow time at most~$c$
  times that of the optimal (i.e., SRPT) schedule.

\item \label{exer:idle}
Consider the objective of minimizing the maximum idle time of a
  job, where the {\em idle time} of job~$j$ in a schedule is $C_j-r_j-\tfrac{p_j}{s}$, where $C_j$ is
  the job's completion time, $r_j$ is its release time, $p_j$ is its
  processing time, and $s$ is the machine speed.  Show that the
  maximum idle time of a job under the SETF algorithm with a
  $(1+\eps)$-speed machine is at most $\tfrac{1}{\eps}$ times that in
  an optimal offline solution to the problem with a unit-speed machine.

\vspace{.25\baselineskip}

\noindent
[Hint: Start from Proposition~\ref{prop:inf3}.]

\end{enumerate}

\end{document}